\documentclass{LMCS}

\def\doi{9(2:13)2013}
\lmcsheading%
{\doi}
{1--17}
{}
{}
{Oct.~22, 2012}
{Jun.~28, 2013}
{}

\usepackage{amsthm}
\usepackage{amsfonts}
\usepackage{amsmath}
\usepackage{amssymb}

\usepackage{hyperref,enumerate}

\usepackage{theory}
\usepackage{theorems}

\usepackage{color}


\begin{document}

\title[Lower Bound on Weights of Large Degree Threshold Functions]{Lower Bound on Weights of Large Degree Threshold Functions\rsuper*}


\date{}

\author[V.~V.~Podolskii]{Vladimir V. Podolskii}
\address{Steklov Mathematical Institute, Moscow}
\email{\href{mailto:podolskii@mi.ras.ru}{podolskii@mi.ras.ru} }

\thanks{The work is supported by the Russian Foundation for Basic Research
and the programme ``Leading Scientific Schools'' (grant no. NSh-5593.2012.1).
}

\keywords{threshold gate, threshold function, perceptron, lower bounds}
\subjclass{F.1.1, F.1.3}

\ACMCCS{[{\bf Theory of computation}]: Models of computation;
  Computational complexity and cryptography---Circuit complexity}

\titlecomment{{\lsuper*}A preliminary version of this paper appeared
  in the proceedings of CiE 2012 conference, Springer LNCS Proceedings
  volume 7318}



\begin{abstract}
An integer polynomial $p$ of $n$ variables
is called a \emph{threshold gate} for a Boolean function $f$
of $n$ variables if for all $x \in \zoon$ $f(x)=1$ if and only if $p(x)\geq 0$.
The \emph{weight} of a threshold gate is the sum of its absolute values.

In this paper we study how large a weight might be needed if we fix some function and some threshold degree.
We prove $2^{\Omega(2^{2n/5})}$ lower bound on this value.
The best previous bound was $2^{\Omega(2^{n/8})}$ (Podolskii, 2009).

In addition we present substantially simpler proof of the weaker $2^{\Omega(2^{n/4})}$ lower bound.
This proof is conceptually similar to other proofs of the bounds on weights of nonlinear threshold gates,
but avoids a lot of technical details arising in other proofs.
We hope that this proof will help to show the ideas behind the construction used to prove these lower bounds.
\end{abstract}

\maketitle

\section{Introduction} \label{sec.introduction}

Let $f \colon \zoon \to \zoo$ be a Boolean function.
\emph{A threshold gate} for the Boolean function $f$ is an integer polynomial $p(x)$ of $n$ variables $x=(x_1, \ldots, x_n)$
such that for any $x \in \zoon$ we have $f(x)= 1$ if and only if $p(x) \geq 0$.
In other words, for all $x \in \zoon$ it is true that $f(x) = \sgn p(x)$,
where we adopt the following definition of the sign function: $\sgn(t)=1$ if $t\geq 0$ and $\sgn(t)=0$ otherwise.

Thus, threshold gates are just representations of Boolean functions as the signs of the polynomials.
The formal study of such representations started in $1968$ with the seminal monograph of Minsky and Papert~\cite{minsky88perceptrons}.
Since then representations of this form found a lot of applications in circuit complexity, structural complexity,
learning theory and communication complexity (see, for \mbox{example~\cite{razborov92thr-circuits,KS01dnf,beigel94perceptrons,dual-survey}}).

Two key complexity measures of threshold gates are their degree and their weight.
\emph{The degree} $\deg p$ of a threshold gate $p$
is just the degree of the polynomial. \emph{The weight} $W(p)$ of a threshold gate $p$ is the sum of absolute values
of all its coefficients.

The complexity measures of a Boolean function $f$ related to these complexity measures of threshold gates
are the minimal threshold degree of a threshold gate for $f$ which we denote by $\signdeg f$ and call \emph{the threshold degree} and
the minimal weight of a threshold gate for $f$.
Both of these complexity measures play an important role in theoretical computer science (see the references above).
In this paper we are interested in the minimal possible value of the weight of a threshold gate for some function $f$
when the degree of the threshold gate is bounded.
It is convenient to denote by $W(f,d)$ the minimal weight of a threshold gate of degree at most $d$ for $f$.
Note that this value is defined only if $d \geq \signdeg f$.
It is also not hard to see that for all $f$ we have $\signdeg f \leq n$ and $W(f,n) \leq 2^{O(n)}$
(just consider the polynomial $p$ such that $p(x)=f(x)$ for all $x$).

The first results on the value of $W(f,d)$ were proven for $d=1$.
In~\cite{muroga61} (see also~\cite{muroga71} and~\cite{hastad94weights}) it was proven that
for all $f$ with $\signdeg f = 1$ it is true that $W(f,1) = n^{O(n)}$.
For a long time only lower bounds of the form $W(f,1) = 2^{\Omega(n)}$ were known
(see~\cite{myhill-kautz61} for one of the early results).
Tight lower bound was proven in~\cite{hastad94weights}, that is the function $f$ with $\signdeg f = 1$
was constructed such that $W(f,1) = n^{\Omega(n)}$.

Concerning higher degree $d$, upper bound can be easily extended from the case $d=1$.
Namely for all $f$ with $\signdeg f \leq d$ it is true (and easy to see) that $W(f,d) = n^{O(dn^{d})}$ (see~\cite{saks93slicing,buhrman07pp-upp,ppi09weights}).
Note that this upper bound is much worse than for the case $d=1$.
Concerning the lower bounds, it is rather easy to see that the bound $n^{\Omega(n)}$ can be translated from
the case $d=1$ to arbitrary $d$ (see the discussion preceding Theorem~\ref{theorem.high.deg}).
The first lower bound improving this was given in~\cite{ppi09weights} and showed that the upper bound stated above is actually tight for constant $d$.
That is, for any constant $d$ the function $f$ of the threshold degree $d$ was constructed in~\cite{ppi09weights} such that $W(f,d) = n^{\Omega(n^{d})}$
(constant in $\Omega$ here depends on $d$).
It is implicit in~\cite{ppi09weights} though that the argument works for
nonconstant $d$ also and the resulting lower bound (with the dependence on $d$)
is
\begin{equation} \label{eq.old}
\left(\frac{n}{d}\right)^{\frac 12 (\frac{n}{2d})^{d} - o((\frac{n}{d})^d)}.
\end{equation}
For this result another proof was given in~\cite{bhps10}.
Some other results on large degree threshold gates which are not directly connected to the problem
we consider have appeared in~\cite{GHR92,beigel94perceptrons,ServedioTT12COLT,trudy11weights,mn10weights,DiakonikolasSTW10CCC}.

Thus it turns out that the required weight grows with the growth of the degree $d$.
In this paper we are interested in how large it might grow (note that for $d=n$ the weight is small again: $W(f,n) = 2^{O(n)}$).
That is we study the value
$$
W = \max_{d} \max_{f \colon \signdeg(f)\leq d} W(f,d).
$$
The lower bound~\eqref{eq.old} works even for $d$ depending linearly on $n$ and so gives doubly exponential lower bound on this value.
But it works only for $d \leq (n-c)/32$,
where $c$ is some constant, so the best lower bound we get from~\cite{ppi09weights} is $2^{\Omega(2^{n/8})}$.

In this paper we prove the following bound.
\begin{thm}
$W \geq 2^{\Omega(2^{2n/5})}$.
\end{thm}
We note that the best upper bound known is simple $2^{O(n2^n)}$ (this can be deduced from the upper bound for the case
$d=1$ and the fact that there are at most $2^n$ monomials).

To prove our lower bound we adopt the strategy of~\cite{ppi09weights}
and provide a unified treatment of the argument of that paper.
In short, the proof strategy is as follows.
Starting from some function of the threshold degree $1$ (with some additional properties) that requires large weight when represented by degree-$1$
threshold gates we construct its ``$d$-dimensional'' generalization in a very specific way.
For this generalization we are able to prove a lower bound for degree-$d$ threshold gates
and due to the specific features of our generalization we can prove a strong lower bound.

In the paper~\cite{ppi09weights} the construction of the function starts with the function constructed by H{\aa}stad in~\cite{hastad94weights}
to prove the optimal lower bound for the case $d=1$. This helps to get $n$ in the base of the exponent in the lower bound
and thus to prove a strong lower bound for the case of constant $d$.
On the other hand, H{\aa}stad's function is very complicated and has
desired properties only for large enough number of variables ($16$ variables).
This does not allow us to prove a lower bound for $d$ close to $n$.
In this paper we start with a much simpler functions having required properties starting from just $3$ variables.
With this function we cannot get $n$ in the base of the exponent, but on the other hand we are able now to
prove bounds for much larger $d$ and thus to get better lower bound on $W$.

We start exposition of our result by giving a simpler proof of the weaker bound of $2^{\Omega(2^{n/4})}$.
In this proof we are able to avoid a lot of technical complications
arising in the proof of~\cite{ppi09weights} and make the function for which we prove the bound much simpler (here we use as a starting function of the threshold degree $1$ well known ``greater than'' function).
We hope that this makes the proof easier to read
and helps to show the ideas behind the construction which were not very clear in~\cite{ppi09weights}.

After that we define another starting function and explain how to change the proof
to get $W \geq 2^{\Omega(2^{2n/5})}$ lower bound.
The idea here is not only that we can prove the bound for larger $d$,
but also that, roughly speaking, choosing the good function we can remove the constant $2$ from the denominator of the term $\left(\frac{n}{2d}\right)$
in the exponent in the bound~\eqref{eq.old}.

Besides representation of Boolean functions as $f \colon \zoon \to \zoo$, also representation
of the form $f \colon \moon \to \moo$ turns out to be useful in complexity theory.
Here $-1$ corresponds to ``true'' and $1$ corresponds to ``false''.
For this representation we can also consider threshold gates and also define corresponding measures of the functions.
Note that we can switch from one representation to another one by a simple linear transform.
Thus the threshold degree of the function does not depend on the representation
and the threshold weight may change only by the $2^n$ multiplicative factor
(see~\cite{KP98threshold} for more information on the relations between the threshold weights in these two settings).
Since this factor is very small compared to our lower bound,
our result is true for both representations and in the proof we can choose the one of two presentations
of Boolean functions which is more convenient to us.
For the proof of the weaker bound we will use $\{0,1\}$ variables and for the stronger one --- $\{-1,1\}$ variables.

We note in the conclusion that if we have the lower bound $S$ on the minimal weight for the function of $n$ variables
and for the degree $d$, it is easy to translate it to exactly the same bound $S$ for $n'=n+c$ and $d'=d+c$ for any $c$,
even depending on $n$ (see, for example,~\cite{mn10weights}, Corollary~$1$).
This observation allows us to deduce strong lower bounds on weights of threshold gates
of degree close to $n$.
\begin{thm} \label{theorem.high.deg}
For any $\eps >0$ and $d \leq (1- \eps )n$ there is an explicit function $f$ such that $W(f,d) = 2^{2^{\Omega(n)}}$.
For any $d \leq n - 2(1+ \eps) \log n$ there is an explicit function $f$ such that $W(f,d) = 2^{\Omega(n^{1+\eps})}$.
\end{thm}

The rest of the paper is devoted to the formulation and the proof of our results.
In Sections~\ref{sec.function} and~\ref{sec.result} we give a simple proof for the weaker bound:
in the former we construct the function for which in the latter we prove the lower bound.
In Section~\ref{sec.better} we explain how to change the proof~to~give~the~stronger~bound.

\section{Construction of the Function} \label{sec.function}

In this section we present the construction of the function for which we prove a weaker form of our bound.
Our function is the generalization of the $\GT$ function.

\begin{definition}
For Boolean $x, y \in \zoo^k$ let $\GT(x,y) = 1$
iff $x \geq y$, where $x = (x_{1}, \ldots, x_{k})$ and $y = (y_{1}, \ldots, y_{k})$ are considered as binary representations of integers
with $x_{k}$ and $y_{k}$ being the most significant bits.
\end{definition}

Our function will depend on $n=2m$ variables
$$
(x,y) = (x_1, \ldots, x_{m}, y_{1}, \ldots, y_{m}) \in \zoon.
$$
Let us fix some $k_1, \ldots, k_{d}$ such that $\sum_{i=1}^{d} k_{i}  = m$ and partition the input variables
$x$ and $y$ in $d$ groups of size $k_1, \ldots, k_d$, that is
$$
(x,y)=(x^{1}, x^{2}, \ldots, x^{d}, y^{1}, y^{2}, \ldots,  y^{d}),
$$
where $x^{i},y^{i} \in \zoo^{k_i}$ for all $i$.

Let us denote by $[k]$ the set $\{1,\ldots,k\}$.
Let us denote by $<_{1}$ the following ordering of the set $[k]$:
$
1,2,3, \ldots, k-1, k,
$
and by $<_{0}$ the reverse ordering:
$
k, k-1, k-2, \ldots, 2, 1.
$
We will use these orders on the sets $[k_1], \ldots, [k_d]$.
It will always be clear from the context which set we consider.

Let us denote by $\num_{i} l$ the ordinal number of $l \in [k]$ w.r.t. the order $<_{i}$.

To define our function we need to define a specific order on the set $K = [k_1] \times \ldots \times [k_d]$.
The construction below is essentially the same as in~\cite{ppi09weights}.
Our order will be similar to the lexicographic one, that is
to compare two tuples from $K$ we will compare their components one by one
until we find the difference. But as opposed to the lexicographic
order, where each component of the tuples is compared w.r.t. the same ordering,
in our order of the tuples components might be compared w.r.t. different orderings.
Moreover, the ordering in which we compare the current component depends not only
on the ordinal number of the component but on the values of previous components of the tuples.

Formally, suppose we want to compare tuples $\alpha = (\alpha_1, \ldots, \alpha_d) \in K$ and
$\beta = (\beta_1, \ldots, \beta_d) \in K$.
First we compare $\alpha_1$ and $\beta_1$ w.r.t. the ordering $<_1$.
If they are not equal then we have already compared the tuples:
the larger the first component is the larger the tuple is.
If they are equal we proceed to the second components.
To compare them we use the following recursive rule to choose the next order.

Assume that the order $<_{i_{l}}$ to compare the $l$th components of the tuples
is already determined and it happens that $\alpha_l=\beta_l$.
The order to compare $(l+1)$st components is determined by the ordinal
number of $\alpha_{l}$ (which coincides with $\beta_l$ by the assumption)
w.r.t. the order  $<_{i_l}$.
Namely,
\begin{equation} \label{eq.order}
i_{l+1} = \num_{i_l} \alpha_l\ (\modd 2).
\end{equation}
In other words, we compare the $(l+1)$st coordinates w.r.t. the order $<_0$
if $\alpha_l$ has even ordinal number w.r.t. the order $<_{i_l}$
and we compare $(l+1)$st coordinates w.r.t. the order $<_1$ otherwise.

To say it the other way, we associate with the coordinates of any tuple $\alpha = (\alpha_1, \ldots, \alpha_d) \in K$
the orders $<_{i_1}, \ldots, <_{i_d}$ according to the rule~\eqref{eq.order}.
We use these orders to compare the coordinates of $\alpha$ with the coordinates of other tuples.
Note that for two tuples $\alpha = (\alpha_1, \ldots, \alpha_d)$ and
$\beta = (\beta_1, \ldots, \beta_d)$ the orders corresponding to their components
coincides until we meet the first difference.
After the first difference the orders corresponding to the components might
be different in $\alpha$ and $\beta$ but we do not need to compare the coordinates any further.

Let us denote by
$
\num_{\alpha,l} \alpha_l
$
the ordinal number of the $l$th component of $\alpha$ w.r.t.
the corresponding order.

It is not hard to describe the order we constructed in the case $d=2$.
Let us represent the pairs $(\alpha_1, \alpha_2)$ by the points on the plane (see Figure).

{
\setlength{\unitlength}{0.6pt}
\begin{figure}
\begin{center}
\begin{picture}(330,330)(-30,-30)
\thicklines
\put(0,0){\vector(0,1){300}}
\put(0,0){\vector(1,0){300}}

\put(30,30){\vector(0,1){240}}
\put(60,270){\vector(0,-1){240}}
\put(90,30){\vector(0,1){240}}

\put(270,270){\vector(0,-1){240}}
\put(240,30){\vector(0,1){240}}

\put(30,270){\vector(1,0){30}}
\put(60,30){\vector(1,0){30}}
\put(90,270){\vector(1,0){15}}
\put(240,270){\vector(1,0){30}}
\put(225,30){\vector(1,0){15}}

\put(30,-2){\line(0,1){4}}
\put(60,-2){\line(0,1){4}}
\put(90,-2){\line(0,1){4}}
\put(240,-2){\line(0,1){4}}
\put(270,-2){\line(0,1){4}}

\put(-2,30){\line(1,0){4}}
\put(-2,60){\line(1,0){4}}
\put(-2,270){\line(1,0){4}}


\put(295,-16){$\alpha_1$}
\put(-25,295){$\alpha_2$}

\put(28,-20){$1$}
\put(58,-20){$2$}
\put(88,-20){$3$}
\put(268,-20){$k_1$}
\put(160,-20){$\ldots$}

\put(160,30){$\cdots$}

\put(-24,27){$1$}
\put(-24,57){$2$}
\put(-24,267){$k_2$}
\end{picture}
\end{center}
\begin{center} Case $d=2$
\end{center}
\end{figure}
}

The first component $\alpha_1$
is associated with the horizontal axis, and the second component
$\alpha_2$ -- with the vertical axis.
The arrows indicate the direction
from the smaller pairs to the larger ones.

In the case $d>2$ the constructed order is not so easy to describe.
However, as our proof goes by induction, we always
consider only two consecutive coordinates of the tuple. Thus we will be in the situation
that is very similar to the case
$d=2$. The only difference is that the order on the first
coordinate might be different from $<_{1}$.

Now we can define our function.

\begin{definition} \label{def.func}
For a given $(x,y) = (x^1, x^2, \ldots, x^d, y^1, y^2, \ldots, y^d)$, where
$x^i = (x^i_1, \ldots, x^i_{k_i}), y^i = (y^i_1, \ldots, y^i_{k_i}) \in \zoo^{k_i}$
let $\alpha = (\alpha_1, \ldots, \alpha_d) \in K$ be the largest tuple
w.r.t. the introduced order such that
$
\prod_{i=1}^{d} (x^{i}_{\alpha_i} - y^{i}_{\alpha_i}) \neq 0.
$
Then let
$$
f(x^{1}, \ldots, x^{d}, y^{1}, \ldots,  y^{d}) = \sgn \prod_{i=1}^{d} (x^{i}_{\alpha_i} - y^{i}_{\alpha_i}).
$$
If there is no such $\alpha$ let
$
f(x^{1}, \ldots, x^{d}, y^{1}, \ldots,  y^{d}) = 1.
$
\end{definition}

Note that if $d=1$ our function is exactly the $\GT$ function.

\section{$2^{\Omega(2^{n/4})}$ Lower Bound} \label{sec.result}

First we note that our function is computable by a degree $d$ threshold gate.
\begin{lemma} \label{lemma.degree}
$\signdeg(f) \leq d$.
\end{lemma}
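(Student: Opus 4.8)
The plan is to exhibit an explicit degree-$d$ integer polynomial $p(x,y)$ with $f(x,y) = \sgn p(x,y)$ for all Boolean inputs. The natural candidate is a weighted sum over the index set $K = [k_1]\times\cdots\times[k_d]$ of the products $\prod_{i=1}^d(x^i_{\alpha_i}-y^i_{\alpha_i})$, one term for each tuple $\alpha\in K$; each such product has degree exactly $d$, so the total degree is $d$ regardless of how many terms we take. The coefficients will be chosen so that, on any input, the term corresponding to the \emph{largest} tuple $\alpha$ with nonzero product dominates all the others combined, and so that its sign agrees with $f$. Concretely I would set
\begin{equation*}
p(x,y) = \sum_{\alpha\in K} c_\alpha \prod_{i=1}^d (x^i_{\alpha_i}-y^i_{\alpha_i}),
\end{equation*}
where $c_\alpha > 0$ is increasing in $\alpha$ with respect to the constructed order, growing fast enough (e.g.\ $c_\alpha = 2^{\mathrm{rank}(\alpha)}$ with $\mathrm{rank}$ the position of $\alpha$ in the order, so $c_\alpha > \sum_{\beta<\alpha} c_\beta$) that the sign of $p$ equals the sign of the top nonzero term.

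The key steps, in order, are: (1) observe that each monomial $x^i_{\alpha_i}$ and $y^i_{\alpha_i}$ contributes degree $1$, so every product is a degree-$d$ polynomial and hence so is $p$; (2) fix an input $(x,y)$, let $\alpha^\star$ be the largest tuple with $\prod_i(x^i_{\alpha^\star_i}-y^i_{\alpha^\star_i})\neq 0$ (if it exists), and note that every summand is in $\{-1,0,1\}$ since each factor $x^i_{\alpha_i}-y^i_{\alpha_i}\in\{-1,0,1\}$ and the product is nonzero only when all factors are $\pm1$; (3) since $c_{\alpha^\star}$ strictly exceeds $\sum_{\beta<\alpha^\star} c_\beta$, and all summands with $\beta>\alpha^\star$ vanish by maximality of $\alpha^\star$, the sign of $p$ equals the sign of $c_{\alpha^\star}\prod_i(x^i_{\alpha^\star_i}-y^i_{\alpha^\star_i})$, which is $\sgn\prod_i(x^i_{\alpha^\star_i}-y^i_{\alpha^\star_i}) = f(x,y)$ because $c_{\alpha^\star}>0$; (4) handle the remaining case where no such $\alpha$ exists: then every product vanishes, so $p(x,y)=0$ and $\sgn p(x,y)=1=f(x,y)$ by the convention $\sgn 0 = 1$ and the last clause of Definition \ref{def.func}. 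Assembling these gives $f = \sgn p$ with $\deg p \le d$.

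The only mildly delicate point — and the place I would spend the most care — is step (3): one must be sure that "largest tuple with nonzero product" really does make all strictly larger tuples contribute $0$, which is immediate from the definition of $\alpha^\star$, and that the geometric-type growth of the $c_\alpha$ along the \emph{specific} order we constructed is well-defined; since $K$ is finite and linearly ordered, ranking its elements $1,\dots,|K|$ and taking $c_\alpha = 2^{\mathrm{rank}(\alpha)-1}$ works, giving $c_{\alpha^\star} = 2^{\mathrm{rank}(\alpha^\star)-1} > 2^{\mathrm{rank}(\alpha^\star)-1}-1 = \sum_{\beta<\alpha^\star} c_\beta$. Everything else is routine. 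Note in particular that the peculiar order on $K$ plays no role here — any linear order would do for the degree bound; its role is reserved for the weight lower bound proved later.
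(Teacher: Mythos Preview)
Your proposal is correct and is essentially identical to the paper's proof: the paper enumerates $K$ as $\alpha^1,\ldots,\alpha^{|K|}$ in the given order, sets $t_j=\prod_i(x^i_{\alpha^j_i}-y^i_{\alpha^j_i})$, takes $p=\sum_j 2^j t_j$, and argues exactly as you do that the top nonzero term dominates (with the all-zero case giving $p=0$). Your remark that the particular order on $K$ is irrelevant for this degree upper bound is correct and worth keeping.
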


\begin{proof}
Let
$$
\alpha^1, \alpha^2, \ldots, \alpha^{|K|}
$$
be the list of all elements of $K$ written w.r.t.
the introduced order.
For any $\alpha^{j} = (\alpha^{j}_{1}, \ldots, \alpha^{j}_{d})$ let $t_j$ be the product
$$
t_j = \prod_{i=1}^{d} (x^{i}_{\alpha^j_i} - y^{i}_{\alpha^j_i}).
$$
Note that for any $j$ and any input $(x,y)$ we have $|t_{j}| \leq 1$.

Consider the polynomial
$$
p(x,y) = \sum_{j=1}^{|K|} 2^{j} t_{j}.
$$
We claim that it sign-represents our function.

Indeed, let us fix some input $(x,y)$. If for this input all $t_{j}$ are zeros then
$f(x,y) = 1$ and $p(x,y) = 0$.
Otherwise, let $\alpha^{j}$ be maximal w.r.t. the introduced order such that $t_{j} \neq 0$.
Then all $t_{i}$ for $i > j$ are zeros, and the coefficients of $p$ are chosen in such way
that the contribution of $t_{j}$ is greater than the sum of contributions of all $t_{i}$ for $i < j$.
Thus the sign of $p(x,y)$ is the same as the sign of $t_{j}$.
\end{proof}

Now we proceed to the main result of this section.

\begin{thm} \label{theorem.main}
Let $k_i \geq 2$ be even for $i < d$ and $k_d \geq 3$. Then
$$
\signdeg(f) = d
$$
$$
W(f,d)\geq 2^{(k_d-2)\prod\limits_{i=1}^{d-1}k_i\ - d}
$$
\end{thm}

\begin{remark}
We can state an analogous theorem for arbitrary $k_i\geq 2$ for $i<d$
and not only for even. However, with this assumption the proof and the bound are cleaner and at the same time
the theorem still gives $2^{\Omega(2^{n/4})}$ bound.
\end{remark}

First we give some corollaries of Theorem~\ref{theorem.main} and then proceed to the proof.

If we let $k_i = 2$ for $i < d$ and $k_d = 3$ in Theorem~\ref{theorem.main} we get
\begin{equation*} 
W \geq W(f,\frac{n-2}{4}) \geq 2^{2^{\frac{n-6}{4}} - d}.
\end{equation*}
More generally,
\begin{corollary}
For all $n$ and all $d \leq \frac{n-2}{4}$ there exists (explicit) function $f$ of $n$ variables such that
$$
W(f,d) \geq 2^{(2\lfloor \frac{n-2}{4d} \rfloor)^{d} - (2\lfloor \frac{n-2}{4d} \rfloor)^{d-1} - d}.
$$
\end{corollary}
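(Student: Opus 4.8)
The plan is to obtain the Corollary as a direct specialization of Theorem~\ref{theorem.main}: I would pick a single uniform value for the block sizes $k_1,\dots,k_d$, tuned so that the exponent produced by Theorem~\ref{theorem.main} is exactly the claimed $(2\lfloor\frac{n-2}{4d}\rfloor)^d - (2\lfloor\frac{n-2}{4d}\rfloor)^{d-1} - d$, and then deal with the (minor) fact that the resulting function lives on slightly fewer than $n$ variables by appending dummy inputs.

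Concretely, set $K := 2\lfloor \frac{n-2}{4d}\rfloor$. The hypothesis $d \leq \frac{n-2}{4}$ gives $\frac{n-2}{4d}\geq 1$, hence $K \geq 2$, and $K$ is even by construction. Choose $k_1 = \dots = k_{d-1} = K$ and $k_d = K+1$; then $k_i \geq 2$ is even for $i<d$ and $k_d \geq 3$, so the hypotheses of Theorem~\ref{theorem.main} are met. The associated function $f$ depends on $n' = 2\sum_{i=1}^d k_i = 2(dK+1) = 4d\lfloor\frac{n-2}{4d}\rfloor + 2$ variables, and by the defining property of the floor $4d\lfloor\frac{n-2}{4d}\rfloor \leq n-2$, so $n' \leq n$. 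Theorem~\ref{theorem.main} then yields $\signdeg(f) = d$ together with
$$
W(f,d) \;\geq\; 2^{(k_d-2)\prod_{i=1}^{d-1}k_i \,-\, d} \;=\; 2^{(K-1)K^{d-1}-d} \;=\; 2^{K^d - K^{d-1} - d},
$$
which is precisely the claimed bound (recalling $K = 2\lfloor\frac{n-2}{4d}\rfloor$), for an explicit function on $n' \leq n$ variables.

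To finish, I would produce a function on exactly $n$ variables by padding: set $\tilde f(x_1,\dots,x_n) := f(x_1,\dots,x_{n'})$, which has the same threshold degree; and for any degree-$d$ threshold gate $p$ for $\tilde f$, fixing the $n-n'$ fictitious variables to $0$ restricts $p$ to a degree-$\leq d$ threshold gate for $f$ of no larger weight (setting a $\{0,1\}$-variable to $0$ only deletes monomials), so $W(\tilde f,d) \geq W(f,d)$ and the bound transfers verbatim — this is the trivial, $d$-fixed case of the standard padding observation quoted in the introduction. There is no genuine obstacle in this argument; it is bookkeeping on top of Theorem~\ref{theorem.main}. The only two points that need a moment's attention are that the assumption $d \leq \frac{n-2}{4}$ is exactly what makes $K \geq 2$ (so that every block is admissible), and that loading the surplus onto $k_d$ as $k_d = K+1$, rather than onto some earlier block, is what makes the exponent come out as $K^d - K^{d-1} - d$ rather than $K^d - 2K^{d-1} - d$.
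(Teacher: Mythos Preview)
Your proposal is correct and follows exactly the same approach as the paper: set $k_i = 2\lfloor\frac{n-2}{4d}\rfloor$ for $i<d$, $k_d = 2\lfloor\frac{n-2}{4d}\rfloor + 1$, apply Theorem~\ref{theorem.main}, and leave the remaining variables inessential. Your write-up is in fact more careful than the paper's one-sentence proof, as you explicitly verify the hypothesis $K\geq 2$, the variable count $n'\leq n$, and the padding step.
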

\begin{proof}
Let $k_i = 2 \lfloor \frac{n-2}{4d} \rfloor$ for $i<d$, $k_d = 2 \lfloor \frac{n-2}{4d} \rfloor + 1$, consider
the function from Theorem~\ref{theorem.main} on the first $2 \sum_{i=1}^d k_i$
variables and leave other variables inessential.
\end{proof}

Analogously we can prove other bounds. For example we can prove the following bound with a simpler formulation
$$
W(f,d) \geq 2^{(2\lfloor \frac{n-4}{4d} \rfloor)^{d} - d}.
$$

\subsection{Proof of Theorem~\ref{theorem.main}}

Let us consider an arbitrary threshold gate $p$
for $f$ of the degree at most $d$.
That is, for any $x, y \in \zoom$ we have
$$
f(x,y) = \sgn (p(x,y)).
$$
It will be convenient for us to work in variables
\begin{equation} \label{eq.new.war}
u^{i}_{j} = x^{i}_j - y^{i}_{j},\ \ \ \ \ \ \ v^{i}_{j} = x^{i}_j + y^{i}_{j}.
\end{equation}
So after substituting
$$
x^{i}_j = ({u^{i}_j + v^{i}_j})/{2},\ \ \ \ y^{i}_j = ({u^{i}_j - v^{i}_j})/{2}
$$
and multiplying the polynomial by $2^d$ to make the coefficients integer we obtain
the polynomial $p'$ in the variables $u^{i}_j,\ v^{i}_j$ that sign-represents $f$. That is
$$
f(x,y) = \sgn(p'(x-y,x+y)).
$$
It is easy to see that the weight of the new polynomial is almost the same as the weight of $p$
(compared to the value of our bound). Namely, we have the following bound.

\begin{lemma} \label{lemma.basis_change}
$W(p') \leq 2^d W(p)$.
\end{lemma}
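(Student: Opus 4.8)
The plan is to track how the coefficients of $p$ transform under the substitution $x^i_j = (u^i_j + v^i_j)/2$, $y^i_j = (u^i_j - v^i_j)/2$, and then account for the multiplication by $2^d$. First I would recall that $p$ has degree at most $d$, so each monomial of $p$ is a product of at most $d$ variables among the $x^i_j$ and $y^i_j$. Under the substitution, each such variable is replaced by a sum of two monomials in the new variables, each with coefficient $1/2$; a monomial $c \cdot z_1 z_2 \cdots z_r$ with $r \le d$ therefore expands into $2^r$ monomials in the new variables, each with coefficient $c/2^r$ in absolute value. Summing absolute values, the contribution of this monomial to the weight of the transformed polynomial is $2^r \cdot |c|/2^r = |c|$ (before clearing denominators), and hence the weight does not increase under the linear change of variables alone.

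Next I would observe that when we collect like terms the weight can only further decrease (cancellation), so after the substitution but before clearing denominators we have a rational-coefficient polynomial of weight at most $W(p)$. Multiplying through by $2^d$ to make all coefficients integers — which is legitimate since $r \le d$ means every denominator divides $2^d$ — scales the weight by exactly $2^d$. Therefore $W(p') \le 2^d W(p)$, as claimed. Finally I would note that multiplying a sign-representing polynomial by the positive constant $2^d$ preserves the sign at every point, so $p'$ indeed sign-represents $f$ in the new variables, i.e. $f(x,y) = \sgn(p'(x-y,x+y))$.

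The only mildly delicate point is making sure the bookkeeping is uniform across monomials of different degrees $r \le d$: a degree-$r$ monomial only picks up a factor $1/2^r$, so to clear all denominators simultaneously one multiplies by $2^d$ rather than $2^r$, which is what inflates the bound to $2^d W(p)$ (and not merely $W(p)$). I do not expect any real obstacle here — the argument is a routine estimate — so the write-up is short; the main thing to be careful about is the direction of the inequality (the change of variables is weight-nonincreasing, and all the loss of tightness comes from clearing denominators).

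\begin{proof}
Write $p(x,y) = \sum_M c_M M$, where the sum is over monomials $M$ in the variables $x^i_j, y^i_j$, each of degree $r_M \le d$. Under the substitution~\eqref{eq.new.war}, each variable $x^i_j$ or $y^i_j$ is replaced by $(u^i_j \pm v^i_j)/2$, so a monomial $M$ of degree $r_M$ becomes a sum of $2^{r_M}$ monomials in the variables $u^i_j, v^i_j$, each with coefficient of absolute value $1/2^{r_M}$. Hence the polynomial $p''(u,v) := p\big((u+v)/2,(u-v)/2\big)$ has rational coefficients, and
$$
W(p'') \le \sum_M |c_M| \cdot 2^{r_M} \cdot \frac{1}{2^{r_M}} = \sum_M |c_M| = W(p),
$$
the inequality accounting for possible cancellations when collecting like terms. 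Since every $r_M \le d$, all denominators of coefficients of $p''$ divide $2^d$, so $p' := 2^d p''$ has integer coefficients and $W(p') = 2^d W(p'') \le 2^d W(p)$. Finally, $2^d > 0$ and the linear substitution is a bijection on inputs, so $p'$ sign-represents $f$, i.e.\ $f(x,y) = \sgn(p'(x-y,x+y))$ for all $x,y \in \zoom$.
\end{proof}
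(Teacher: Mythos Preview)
Your proof is correct and follows essentially the same approach as the paper: both track a single monomial of degree $l\le d$, observe that the substitution replaces it by $2^l$ monomials each carrying a factor $1/2^l$, and then note that clearing denominators by multiplying through by $2^d$ inflates the weight by at most $2^d$. Your version is slightly more explicit (introducing the intermediate $p''$ and noting that cancellation can only help), but there is no substantive difference.
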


\begin{proof}
Consider one monomial $g$ of $p$ of the degree $l \leq d$.
We can think of the transformation above as of substituting each variable by the sum (or the difference)
of two variables and then multiplying everything by $2^{d-l}$.
After opening the brackets we will have $2^{l}$ new monomials each of the weight $2^{d-l}$,
so the overall weight of the new monomials coming from the monomial $g$ is $2^{d}$ times the coefficient of $g$.
Since this happens for all monomials, the upper bound on the weight of the polynomial $p'$ as stated follows.
\end{proof}

\begin{remark}
A similar bound holds in the other direction too, but we do not need it.
\end{remark}

Now we have to prove that
$$
W(p') \geq 2^{(k_d-2)\prod_{i=1}^{d-1}k_i}.
$$
First we will prove that we can assume that $p'$ has a nice structure.
Lemmas similar to the next one appeared in~\cite{hastad94weights,ppi09weights} (see \cite{mn10weights} for a more general version).
\begin{lemma} \label{lemma.symmetry}
If we substitute by $0$ all coefficients of the monomials of $p'$ in which the variables from one of the groups $u^{1}, \ldots, u^{d}$ do not appear,
the resulting polynomial $q$ will also sign-represent $f$.
\end{lemma}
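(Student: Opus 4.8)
The plan is to show that removing the offending monomials does not change the sign of the polynomial on any Boolean input, by exploiting a symmetry of $f$ together with a symmetrization (averaging) argument over the group of symmetries. First I would identify the relevant symmetry: for each group $i$, simultaneously swapping $x^i_j \leftrightarrow y^i_j$ for all $j \in [k_i]$ leaves $f$ unchanged. Indeed, in the new variables $u^i_j = x^i_j - y^i_j$, $v^i_j = x^i_j + y^i_j$ this transformation negates every $u^i_j$ (and fixes every $v^i_j$), and since $f = \sgn \prod_{i=1}^d (x^i_{\alpha_i} - y^i_{\alpha_i}) = \sgn \prod_{i=1}^d u^i_{\alpha_i}$ with the same winning tuple $\alpha$ (the winning tuple is determined by which $u^i_{\alpha_i} \neq 0$, which is unaffected), flipping the signs of all $u^i$ in a single group flips exactly one factor of the product — wait, that flips the sign of $f$. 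So the correct invariance is that flipping the signs of the $u$-variables in an \emph{even} number of groups preserves $f$, and since $k_d \geq 3$ is the only group where we don't need parity, I would instead use: flipping all $u^i$ in group $i$ together with flipping all $u^j$ in group $j$, for $i \neq j$, preserves $f$; equivalently, the sign pattern of $f$ depends on the $u^i$ only through the products, so any transformation negating $u$-variables in an even number of groups is a symmetry.

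Next I would set up the averaging. Let $\sigma_i$ denote the linear substitution $u^i_j \mapsto -u^i_j$ (all $j$), $v^i_j \mapsto v^i_j$, and consider the polynomial $q$ obtained by retaining in $p'$ only monomials in which \emph{every} group $u^1,\dots,u^d$ is represented. The key observation is that $q$ can be obtained from $p'$ by an explicit averaging over sign flips: for a monomial of $p'$, summing $\sigma_S(p')$ over all subsets $S \subseteq [d]$ of groups whose $u$-variables we flip (with appropriate $\pm$ signs chosen as a character of $(\mathbb{Z}/2)^d$) kills exactly those monomials missing some group and multiplies the surviving ones by $2^d$. Concretely, $2^d q = \sum_{S \subseteq [d]} (-1)^{|S|} \, \chi_S \cdot \sigma_S(p')$ for the right choice of sign bookkeeping, or more simply: a monomial survives the averaging iff for each group it has odd total degree in that group's $u$-variables — here I would need to be slightly more careful and perhaps first argue one can assume $p'$ is multilinear in the $u$-variables (each $u^i_j \in \{-1,0,1\}$ but on Boolean inputs $x^i_j, y^i_j \in \{0,1\}$ we have $(u^i_j)^2 = v^i_j$... ), so I would instead phrase it as: each $\sigma_S(p')$ sign-represents $f$ because $\sigma_S$ with $|S|$ even is a symmetry of $f$ and with $|S|$ odd negates $f$ hence $-\sigma_S(p')$ sign-represents $f$; a positive combination $q = \frac{1}{2^d}\sum_S \pm \sigma_S(p')$ of polynomials all sign-representing $f$ again sign-represents $f$ (the sign of a positive combination of reals all of the same sign, with at least the $S=\emptyset$ term nonzero-dominating, is that sign), and this particular combination equals the restriction of $p'$ to monomials meeting every group.

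The main obstacle is getting the averaging identity exactly right, i.e. verifying that the specific linear combination over $(\mathbb{Z}/2)^d$ of the $\sigma_S(p')$ projects onto precisely the monomials in which all $d$ groups appear, and simultaneously that each term in the combination sign-represents $f$ (up to the global sign one corrects by the $(-1)^{|S|}$). This requires decomposing each monomial of $p'$ according to the parity of its degree in each group's $u$-block and checking that under $u^i \mapsto -u^i$ a monomial is multiplied by $(-1)^{(\text{degree in group } i\text{'s }u\text{-variables})}$; a monomial survives the signed average iff that parity is odd for every group, which is exactly the condition "every group's $u$-variables appear." One subtlety to handle: we want "appear" (degree $\geq 1$), not "appear with odd degree," so I would first reduce to the case where $p'$ is multilinear — legitimate because on inputs with $x^i_j, y^i_j \in \{0,1\}$ one has $(u^i_j)^2 = u^i_j v^i_j$ and more usefully one can always reduce any polynomial modulo the ideal generated by $((u^i_j)^2 - v^i_j)$ or simply pass to the multilinearization over the $2n$ Boolean variables $x^i_j, y^i_j$ first and only then change variables, so that "appears" and "appears with odd degree" coincide for the $u$-variables. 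Once multilinearity is in hand the character-sum computation is routine, and Lemma~\ref{lemma.symmetry} follows.
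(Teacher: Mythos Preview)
Your core idea is right and is the same as the paper's: for each $i$, swapping $x^i \leftrightarrow y^i$ negates all of $u^i$ (fixing everything else) and flips the value of $f$, so one can project $p'$ onto the part that is odd in every block $u^1,\dots,u^d$. The paper carries this out one block at a time: writing $p' = A + B$ with $A$ even and $B$ odd in $u^i$, the flip takes $A+B$ to $A-B$ with the opposite sign, hence $|B|>|A|$ and $\sgn(A+B)=\sgn B$; iterating over $i$ yields $q$. Your averaging over $(\mathbb{Z}/2)^d$ is the same projection done globally, and the justification ``a nonnegative combination of sign-representers of $f$ again sign-represents $f$'' is correct once you observe that for $|S|$ odd the polynomial $-\sigma_S(p')$ sign-represents $f$.

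Two points need repair. First, the swap $x^i \leftrightarrow y^i$ flips $f$ only on the set $I=\{(x,y): x^i\neq y^i \text{ for all } i\}$; on inputs outside $I$ the symmetry argument says nothing. The paper (and you should too) treats these inputs separately: if some $x^i=y^i$ then every $u^i_j=0$, so $q$ vanishes identically there while $f=1=\sgn 0$. Second, your detour through multilinearity is both unnecessary and contains wrong identities: neither $(u^i_j)^2 = v^i_j$ nor $(u^i_j)^2 = u^i_j v^i_j$ holds on Boolean inputs (take $x^i_j=y^i_j=1$, respectively $x^i_j=0,\,y^i_j=1$). You do not need multilinearity at all: since $\deg p' \le d$, any monomial surviving the projection has odd (hence $\ge 1$) $u$-degree in each of the $d$ blocks, forcing degree exactly $1$ in each block and no $v$-variables. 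Thus ``odd degree in every block'' and ``every block appears'' already coincide, which is exactly how the paper concludes.
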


\begin{proof}
Let $I$ be the set of inputs $(x,y)$ to $f$ such that $x^i \neq y^i$ for all $i = 1, \ldots, d$.
This condition implies that for each such input there is an $\alpha = (\alpha_1, \ldots, \alpha_d) \in K$ such
that
$$
\prod_{i=1}^{d} (x^{i}_{\alpha_i} - y^{i}_{\alpha_i}) \neq 0.
$$
This means that the value of the function is determined by the sign of this product for the largest such $\alpha$.

Consider an arbitrary input $(x,y)$ from $I$ and
consider an arbitrary $i \in [d]$. Let us permute the variables $x^{i}$ with the variables $y^{i}$.
It is not hard to see that after such permutation the function $f$ changes sign, and so $p'$ also should change sign.

Now let us see what happens with $p'(u,v)$. Note, that after this permutation all variables $u$ and $v$ remains the same except $u^i$
which changes sign. Let us denote by $A$ the sum of all monomials of $p'$ which contain even number of variables from $u^{i}$ and
by $B$ the sum of all monomials of $p'$ which contain odd number of variables from $u^{i}$, so
$$
p'(u,v) = A(u,v) + B(u,v).
$$
Note that after our permutation of variables $A$ remains the same and $B$ changes sign.
Since $p'$ changes sign we have that the absolute value of $B$ is greater then the absolute value of $A$.
We proved that this happens for any input from $I$, so the sign of $f$ is determined by the sign of $B$ for all such inputs.
So we can erase all monomials from $A$ and the resulting polynomial will still sign-represent $f$ for all inputs from $I$.

Repeating this argument for all $i \in [d]$ we obtain a threshold gate $q$ for $f$ such that
each monomial of $q$ contains odd number of variables from each of the groups $u^{1} ,\ldots, u^{d}$.
But the degree of $q$ is at most $d$, so it is only possible that each monomial of $q$ contains one variable
from each of the groups $u^{1} ,\ldots, u^{d}$ (and no variables from $v^{1} ,\ldots, v^{d}$).

We have proved that the new polynomial $q$ sign-represents $f$ correctly for all inputs from $I$.
Now note, that for any input from $\zoon \setminus I$ there is an $i \in [d]$ such that $x^{i} = y^{i}$, or equivalently,
all variables from $u^{i}$ are zeros.
By the definition the value of the function $f$ on such input is $1$ and the value of $q$ is $0$, so
$q$ sign-represents $f$ for these inputs also.
\end{proof}

As a byproduct of the proof of this lemma we have the following corollary.
\begin{corollary}
$\signdeg(f) = d$.
\end{corollary}

Since $W(p') \geq W(q)$ it is enough to prove that
$$
W(q) \geq 2^{(k_d-2)\prod_{i=1}^{d-1}k_i}.
$$

Now we need a lemma concerning degree $1$ threshold gates for $\GT$.
The argument is quite standard (see~\cite{myhill-kautz61,parberry94mit}).

\begin{lemma} \label{lemma.gt}
Let
$
p = \sum_{i=1}^k w_i u_i
$
be a degree $1$ threshold gate for $\GT(x,y)$ where $x,y \in \zook$.
Then for $j \geq 2$
\begin{equation} \label{eq.gt.exp}
w_j \geq 2^{j-2} w_{1} > 0
\end{equation}
and
\begin{equation} \label{eq.gt.step}
w_{j} \geq w_{j-1}.
\end{equation}
\end{lemma}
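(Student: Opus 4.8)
The plan is to analyze the variable $u_k$ (the most significant bit position) first, then work downward. Since $p = \sum_{i=1}^k w_i u_i$ sign-represents $\GT(x,y)$ in the variables $u_i = x_i - y_i \in \{-1,0,1\}$, I want to compare carefully chosen pairs of inputs that differ in exactly one bit position and flip the value of $\GT$. Concretely, for a fixed position $j$, consider the input where $x_i = y_i$ for all $i > j$, $x_j = 1, y_j = 0$ (so $u_j = 1$), and the lower bits $1, \ldots, j-1$ of $x$ are all $0$ while those of $y$ are all $1$. Here $x > y$ so $\GT = 1$, hence $p \geq 0$ on this input, giving $w_j - (w_1 + \cdots + w_{j-1}) \geq 0$. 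Comparing with the input where instead $x_j = 0, y_j = 1$ (so $u_j = -1$) and the low bits of $x$ are all $1$, those of $y$ all $0$: now $x < y$, so $\GT = 0$ and $p < 0$, giving $-w_j + (w_1 + \cdots + w_{j-1}) < 0$. Together these yield $w_j > w_1 + w_2 + \cdots + w_{j-1}$.

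From the single inequality $w_j > w_1 + \cdots + w_{j-1}$ for every $j \geq 2$ (and the base case, which first forces $w_1 > 0$: take the input with all bits equal except $u_1 = 1$ versus all bits equal except $u_1 = -1$, i.e. $x = (1,0,\ldots,0)$ vs $y = (0,0,\ldots,0)$ forcing $w_1 \geq 0$, and $x = (0,\ldots,0)$ vs $y = (1,0,\ldots,0)$ forcing $w_1 \le 0$ only in the strict complementary sense --- here I must be slightly careful with the $\sgn(t) = 1$ iff $t \ge 0$ convention, so the second comparison gives $-w_1 < 0$, hence $w_1 > 0$), everything follows by induction. Indeed $w_2 > w_1 > 0$, and assuming $w_i \geq 2^{i-2} w_1 > 0$ for all $i \le j-1$, we get
$$
w_j > w_1 + w_2 + \cdots + w_{j-1} \geq w_1(1 + 1 + 2 + \cdots + 2^{j-3}) = w_1 \cdot 2^{j-2},
$$
using $1 + (1 + 2 + \cdots + 2^{j-3}) = 2^{j-2}$. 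This proves~\eqref{eq.gt.exp}. For~\eqref{eq.gt.step}, note $w_j > w_1 + \cdots + w_{j-1} \geq w_{j-1}$ directly since all earlier weights are positive (for $j \geq 3$; and $w_2 > w_1$ handles $j = 2$), so in fact we even get the strict inequality $w_j > w_{j-1}$, which certainly implies $w_j \geq w_{j-1}$.

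The main obstacle is bookkeeping the boundary convention $\sgn(t) = 1 \iff t \ge 0$ so that the strict versus non-strict inequalities come out correctly: the ``$\GT = 1$'' inputs only give $p \ge 0$, while the ``$\GT = 0$'' inputs give $p < 0$, and one must pair them so that a strict inequality survives in the combination. The only genuinely content-bearing observation is the choice of the two comparison inputs at position $j$ — padding the lower bits maximally against the sign of $u_j$ — which is exactly the standard ``greater-than needs exponential weights'' argument; once the inputs are written down, the rest is the elementary induction above.
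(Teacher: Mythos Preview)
Your proof is correct and follows essentially the same approach as the paper's: both derive the key inequality $w_j \geq \sum_{i=1}^{j-1} w_i$ by evaluating $p$ on the input $u = (-1,\ldots,-1,1,0,\ldots,0)$, show separately that the weights are positive via $u$ with a single $-1$, and then run the obvious induction to get $w_j \geq 2^{j-2} w_1$ and $w_j \geq w_{j-1}$. The only cosmetic difference is that you also test the mirrored input (with $u_j = -1$ and $u_i = 1$ for $i<j$), which yields a strict version of the sum inequality; this is harmless but unnecessary, since the paper's single input plus the positivity of the $w_i$ already suffice for the lemma as stated.
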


\begin{proof}
We will actually prove that for each $j$ $w_j>0$ and for each $j\geq2$
\begin{equation} \label{eq.gt.proof}
w_{j} \geq \sum_{i=1}^{j-1} w_{i}.
\end{equation}

The inequality~\eqref{eq.gt.step} follows from this immediately
and the inequality~\eqref{eq.gt.exp} is easy to prove by induction.

To prove the inequality~\eqref{eq.gt.proof} we consider the specific input
$$
u = (\overbrace{-1, \ldots, -1}^{j-1}, 1, 0 \ldots, 0).
$$
It is easy to see that on this input the function $\GT$ is $1$, and this means
that $p(u)\geq 0$. It is easy to see that the inequality~\eqref{eq.gt.proof} follows.

To prove that $w_{j}>0$ just let $u_j=-1$ and $u_i = 0$ for $i \neq j$.
$\GT$ is zero for such input, so $p(u)<0$.
\end{proof}

It will be convenient for us to consider two variants of the $\GT$ function:
we denote by $GT_1$ the usual $\GT$ function and by $\GT_0$ the analogous function,
but now on the reversed input. That is, $\GT_0(x,y) =1$
if and only if $x \geq y$, where $x = (x_1, \ldots, x_k), y=(y_1,\ldots,y_k)$
are considered as binary representations of integer numbers where the most significant bits are
$x_{1}, y_{1}$.
It is easy to see that if
$$
p = \sum_{i=1}^k w_i u_i
$$
is a threshold gate for $\GT_0$ then we have
$$
w_{j-1} \geq w_{j}
$$
and
$$
w_{n-j+1} \geq 2^{j-2} w_{n} > 0,
$$
where $j = 2, \ldots n$.

Now we can prove the main lemma.

\begin{lemma} \label{lemma.main}
For all $l \leq d$ if $\alpha \in K$ is such that
$\num_{\alpha,i} \alpha_{i} = 1$ for all $i \geq l$ and
$\beta = (\alpha_1, \ldots, k_l - \alpha_l + 1, \ldots, \alpha_d)$.
Then
$$
w_{\beta} \geq w_{\alpha} 2^{(k_d-2)\prod_{i=l}^{d-1}k_i}.
$$
\end{lemma}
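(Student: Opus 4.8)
\textbf{Proof plan for Lemma~\ref{lemma.main}.}
The plan is to prove the lemma by downward induction on $l$, from $l=d$ to the target value, generalizing the picture drawn for $d=2$ above. Throughout, by Lemma~\ref{lemma.symmetry} we may work with the threshold gate $q(u)=\sum_{\gamma\in K}w_\gamma\prod_{i=1}^d u^i_{\gamma_i}$, so that $w_\gamma$ is literally the coefficient of $q$ at the monomial $\prod_i u^i_{\gamma_i}$; it is convenient to identify each $\gamma$ with its ordinal-number tuple $(\num_{\gamma,1}\gamma_1,\dots,\num_{\gamma,d}\gamma_d)$, which ranges bijectively over $[k_1]\times\dots\times[k_d]$ (the order built on $K$ is just the lexicographic order on these tuples). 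The first thing to record is a purely combinatorial fact about reflections: since reflecting a coordinate value $a\mapsto k-a+1$ also reflects its ordinal number $t\mapsto k-t+1$, and since $k_i$ is even for $i<d$, reflecting $\alpha_l$ turns its ordinal number from $1$ into $k_l$ \emph{and flips its parity}, which by~\eqref{eq.order} swaps the order used on coordinate $l+1$, hence turns $\num_{\alpha,l+1}\alpha_{l+1}=1$ into $k_{l+1}$, and the same propagates through coordinates $l+2,\dots,d$. Thus $\alpha$ and $\beta$ agree on the first $l-1$ ordinal numbers, $\alpha$ has all later ones equal to $1$, and $\beta$ has them equal to $k_l,k_{l+1},\dots,k_d$.

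For the base case $l=d$, set $u^i_{\alpha_i}=1$ and $u^i_j=0$ for $j\neq\alpha_i$, for every $i<d$. Then $q$ restricts to the linear form $\sum_{s=1}^{k_d}w_{(\alpha_1,\dots,\alpha_{d-1},s)}u^d_s$, and by Definition~\ref{def.func} the function $f$ restricts to ``greater than'' on the $d$th block of bits read in the order $<_{i_d}$ determined by $\alpha_1,\dots,\alpha_{d-1}$ (if all $u^d_s$ vanish both sides take the value $1$, resp.\ $0$). Hence this linear form is a degree-$1$ threshold gate for $\GT$, or for its reversed variant $\GT_0$, and in either case Lemma~\ref{lemma.gt} says, read off in ordinal-number order, that the coefficients are nondecreasing and that the coefficient at ordinal number $t$ is at least $2^{t-2}$ times the one at ordinal number $1$. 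Since $\alpha_d$ has ordinal number $1$ and $\beta_d=k_d-\alpha_d+1$ has ordinal number $k_d$, this is exactly $w_\beta\ge 2^{k_d-2}w_\alpha$, the claim for $l=d$ (the product $\prod_{i=l}^{d-1}k_i$ being empty).

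For the inductive step assume the lemma for $l+1$. For $j=1,\dots,k_l$ let $A_j$ (resp.\ $B_j$) be the tuple whose ordinal-number tuple has first $l-1$ entries those of $\alpha$, $l$th entry $j$, and last $d-l$ entries $(1,\dots,1)$ (resp.\ $(k_{l+1},\dots,k_d)$); by the reflection analysis of the first paragraph, $A_1=\alpha$ and $B_{k_l}=\beta$. Two inequalities link them. First, $w_{B_j}\ge w_{A_j}\,2^{(k_d-2)\prod_{i=l+1}^{d-1}k_i}$: this is the inductive hypothesis applied to $A_j$ (which has ordinal number $1$ on all coordinates from $l+1$ on), after checking, exactly as before, that the coordinate-$(l+1)$ reflection of $A_j$ is $B_j$. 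Second, $w_{A_{j+1}}\ge w_{B_j}$: the reflection analysis shows that $B_j$ and $A_{j+1}$ have equal index values on every coordinate except the $l$th, where their ordinal numbers are $j$ and $j+1$; so fixing every variable outside block $l$ to these common index values (each set to $1$) restricts $q$ to a linear form in $u^l$ sign-representing ``greater than'' on block $l$ in the order $<_{i_l}$, and the monotonicity part of Lemma~\ref{lemma.gt} (in ordinal-number order) gives $w_{A_{j+1}}\ge w_{B_j}$. Writing $E=(k_d-2)\prod_{i=l+1}^{d-1}k_i$ and chaining,
$$
w_\beta=w_{B_{k_l}}\ \ge\ 2^{E}w_{A_{k_l}}\ \ge\ 2^{E}w_{B_{k_l-1}}\ \ge\ 2^{2E}w_{A_{k_l-1}}\ \ge\ \cdots\ \ge\ 2^{k_lE}w_{A_1}=2^{k_lE}w_\alpha,
$$
and $k_lE=(k_d-2)\prod_{i=l}^{d-1}k_i$ closes the induction.

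The main obstacle is entirely the reflection/parity bookkeeping: checking that reflecting a coordinate in the middle of a tuple really pushes all later ordinal numbers from $1$ up to $k_i$ — which is the only place that evenness of $k_i$ for $i<d$ is used — and checking that consecutive ``columns'' $B_j$ and $A_{j+1}$ differ only in the $l$th index, so that Lemma~\ref{lemma.gt} applies cleanly to the obvious restriction of $q$. Once this is in place, the base case and the second inequality of the step are each a single application of Lemma~\ref{lemma.gt} to a well-chosen restriction of $q$, and the first inequality of the step is a single application of the inductive hypothesis.
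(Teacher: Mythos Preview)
Your proof is correct and follows essentially the same approach as the paper: downward induction on $l$, base case by restricting to the $d$th block and applying Lemma~\ref{lemma.gt}, and an induction step that alternates between one application of the inductive hypothesis at level $l{+}1$ (your $A_j\to B_j$, the paper's Step~1) and one application of the monotonicity inequality~\eqref{eq.gt.step} after restricting to block $l$ (your $B_j\to A_{j+1}$, the paper's Step~2), iterated $k_l$ times. Your ordinal-number bookkeeping and the identification of $A_1=\alpha$, $B_{k_l}=\beta$ match the paper's verification that $\widetilde{\alpha}^{k_l}=\beta$; the only cosmetic difference is that you package the reflection/parity analysis up front and use the cleaner $A_j,B_j$ notation in place of $\alpha^i,\widetilde{\alpha}^{i+1}$.
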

The idea is the following: we fix variables in all groups
$u^1, \ldots, u^d$ except one. Then the function $f$ becomes essentially the $\GT$ function
and we can apply Lemma~\ref{lemma.gt}. Repeating this trick we can accumulate the large factor
due to the specific construction of our order. More specifically the proof goes by the induction on the decreasing $l$.
For the base of the induction $l=d$ we fix all variables except $u^d$ and applying inequality~\eqref{eq.gt.exp}
immediately obtain the desired result. For the induction step we first apply the induction hypothesis to $l+1$
and then apply inequality~\eqref{eq.gt.step} to the $l$th coordinate. Then we can again apply induction hypothesis to $l+1$
and so forth. In this way we can apply induction hypothesis $k_l$ times and obtain the desired result.
We proceed to the detailed proof.

\begin{proof}
The proof goes by the induction on the decreasing $l$.

\paragraph{\textbf{The base of induction $l=d$.}}

We fix all variables $u^{i}$ except $u^{d}$ in the following way.
For any $i$ let $u^{i}_{\alpha_i} = 1$ and $u^{i}_{j} = 0$ for all $j \neq \alpha_{i}$.

Now we have a function in variables $u^{d}$ and it is not hard to see that this function coincides with either $\GT_0$, or $\GT_1$.
Applying the inequality~\eqref{eq.gt.exp} (or the corresponding inequality for $\GT_0$) for $j=k_d$ we obtain
$$
w_{\beta} \geq 2^{k_d - 2} w_{\alpha}.
$$

\paragraph{\textbf{Induction step.}}

To show the lemma for $l<d$
we repeat several times the following procedure consisting of two steps.
After the $i$th application of the procedure we will get a tuple $\alpha^i$ such that
$\num_{\alpha^i,l} \alpha^{i}_{l} = i+1$ and $\num_{\alpha^i,j} \alpha^{i}_{j} = 1$ for all $j>l$.
To unify the notation we denote $\alpha^0 = \alpha$.

During the procedure we will not change the values of the first $l-1$ coordinates.
This means that for all tuples we consider, the order corresponding to the $l$th coordinate is the same.
Let us assume without loss of generality that this order is $<_{1}$.
This in particular means that $\alpha_l = 1$.

We also will not change the coordinates $\alpha_{l+2}, \ldots, \alpha_{d}$.
Note, that this means that if $\num_{\alpha^{i},l+1} \alpha^{i}_{l+1}$ is odd (as in the beginning)
then all $$\num_{\alpha^i,l+2} \alpha^{i}_{l+2}, \ldots, \num_{\alpha^i,d} \alpha^{i}_{d}$$ are equal to $1$.

\textbf{Step 1.}
We apply the induction hypothesis for the coordinate $l+1$.
We have that for $\widetilde{\alpha}^{i+1} = (\alpha_1, \ldots, \alpha^{i}_l, k_{l+1} - \alpha^{i}_{l+1} + 1, \ldots, \alpha_d)$
$$
w_{\widetilde{\alpha}^{i+1}} \geq w_{\alpha^i} 2^{(k_d-2)\prod_{i=l+1}^{d-1}k_i}.
$$
Note that now the ordinal number of $l+1$st coordinate (w.r.t. the corresponding order) is $k_{l+1}$.

\textbf{Step 2.}
We fix all variables $u^{j}$ except $u^{l}$ in the following way:
for any $j$ let $u^{j}_{\widetilde{\alpha}^{i+1}_{j}} = 1$ and $u^{j}_{m} = 0$ for all $m \neq \widetilde{\alpha}^{i+1}_{j}$.
Now we have a function in the variables $u^{l}$ and it is not hard to see that this function coincides with $\GT_1$
(this happens because we agreed that the order corresponding to the $l$th component is $<_{1}$, if it were $<_{0}$ we would have $\GT_{0}$ here).
We apply the inequality~\eqref{eq.gt.step} to the coordinate $l$.
After that we get
$$
\alpha^{i+1} = (\alpha_1, \ldots, \alpha_l + i+1, k_{l+1} - \alpha^{i}_{l+1}+1, \ldots, \alpha_d)
$$
such that
$$
w_{\alpha^{i+1}} \geq w_{\widetilde{\alpha}^{i+1}}.
$$
Due to the rule~\eqref{eq.order} defining the order on the next component of the tuple we have that
the order on the $l+1$st component changes.
This means that the ordinal number of the $l+1$st coordinate w.r.t. the corresponding order is again $1$.
>From this we have, as we stated above, that for all $j \geq l+1$ it is true that $\num_{\alpha^{i+1},j} \alpha^{i+1}_{j} = 1$.
So now we are again in the position to apply Step $1$.

We repeat these two steps until the $l$th coordinate of $\widetilde{\alpha}^{i}$ reaches $k_l$
(in the end we repeat Step $1$, we are unable to repeat Step $2$ since the $l$th coordinate is already $k_l$ and can not be increased,
so in the end we get $\widetilde{\alpha}^{i}$ for suitable $i$).
Since the $l$th coordinate increases by $1$ at each iteration (on Step $2$) we can repeat Step $2$ $k_{l}-1$ times and Step $1$ $k_l$ times.
Thus in the end we get the vector $\widetilde{\alpha}^{k_{l}}$
and it is easy to see that
$$
w_{\widetilde{\alpha}^{k_{l}}} \geq w_{\alpha} \prod_{j=1}^{k_{l}} 2^{(k_d-2)\prod_{i=l+1}^{d-1}k_i} =
w_{\alpha} 2^{(k_d-2)\prod_{i=l}^{d-1}k_i}.
$$
Let us check that $\beta = \widetilde{\alpha}^{k_{l}}$.
It is easy to see that during this process only the coordinates $l$ and $l+1$ of $\alpha$ change.
In the end of the process coordinate $l$ has the number $k_l$ w.r.t. the order corresponding to the $l$th coordinate of $\alpha$.
This means that $\widetilde{\alpha}^{k_l}_{l} = k_l - \alpha_l + 1 = \beta_l$.
Note also that from this and from the evenness of $k_l$ we have that $\alpha_{l}$ and $\widetilde{\alpha}^{k_l}_{l}$
define different orders on the next coordinates.
Now let us see what happens with the coordinate $l+1$.
After the process the ordinal number of $\widetilde{\alpha}^{k_l}_{l+1}$
w.r.t. the order corresponding to the $(l+1)$st component of $\widetilde{\alpha}^{k_l}$ is equal to $k_{l+1}$.
This means that the ordinal number of $\widetilde{\alpha}^{k_l}_{l+1}$ w.r.t. the other ordering is $1$.
Since orders corresponding to the $(l+1)$st coordinates of $\alpha$ and $\widetilde{\alpha}^{k_l}$ are different
we have that $\widetilde{\alpha}^{k_l}_{l+1} = \alpha_{l+1} = \beta_{l+1}$ (recall that $\num_{\alpha,l+1} \alpha_{l+1} = 1$).
\end{proof}

It is easy to prove Theorem~\ref{theorem.main} now.
Applying Lemma~\ref{lemma.main} with $l=1$ we get
$$
w_{\beta} \geq w_{\alpha} 2^{(k_d-2)\prod_{i=1}^{d-1}k_i}.
$$
Now, it is easy to see that $w_{\alpha}>0$ (just substitute $u^{1}_{\alpha_1} = -1$, $u^{i}_{\alpha_i} = 1$
for all $i \neq 1$ and $u^{i}_{j} = 0$ for all $i$ and all $j \neq \alpha_i$).
We conclude that $w_{\alpha} \geq 1$ and
$$
w_{\beta} \geq 2^{(k_d-2)\prod_{i=1}^{d-1}k_i}.
$$

\section{Improved Lower Bound} \label{sec.better}

In this section we improve the argument of the previous sections to obtain the better lower bound.
More precisely we prove
$$
W \geq 2^{\Omega(2^{2n/5})}.
$$

We will work with Boolean variables $\moo$, so we change the definition of the $\sgn$-function:
$\sgn(x)=1$ if $x \geq 0$ and $\sgn(x)=-1$ otherwise.

The idea is to use another function instead of $\GT$ as a building block
in our construction.
Indeed,
we may try to carry out the proof
if we use any function $f \colon \moon \to \moo$
which can be defined in the following terms:
consider uniform linear forms $L_1(x), \ldots, L_{k}(x)$ and for any $x \in \moon$ let $f(x)$ be the sign of $L_i(x)$,
where $i$ is the maximal index such that $L_i(x)$ is nonzero.
Here uniformity of linear forms is required to make the symmetry argument of Lemma~\ref{lemma.symmetry} work.
Note also that we can use different functions on different coordinates
and in fact there is sense in it since the last coordinate plays a very different role than the other coordinates.
Actually, we have already used different functions in different coordinates when we choose numbers of variables $k_i$
to be different for different $i$.

Note that the function $\GT$ in coordinates $1, \ldots, d-1$ is not very economical.
Indeed, we use $2k$ variables and get only $k$ iterations in one coordinate
in the proof of Lemma~\ref{lemma.main} (that is, $k_l$ applications of Step~$1$ in the proof of this lemma).
So we can have about $n/2k$ coordinates with $k$ iterations each, so we have bound of approximately $2^{\Omega(k^{n/2k})}$.
If with some other function we can have more iterations with less variables, we will be able to improve the bound.

We prove that for any $k \geq 3$ there is a function which with the use of $k$ variables allows us to make $k-1$ iterations.
\begin{definition}
For $x = (x_1, \ldots, x_k) \in \moo^k$ let $g(x_{1}, \ldots, x_{k})$ be equal to $-x_k$ if the bits $x_{1}, \ldots, x_{k}$ are not all equal and let it be $x_k$ if they are all equal.
\end{definition}

This function can be easily represented by a linear threshold gate:
$$
g(x_1, \ldots, x_{k}) = \sgn(\sum_{i=1}^{k-1} x_i -(k-2)x_{k}).
$$

But to make $g$ to be suitable as a building block in our combinatorial construction we have to express it in terms of
linear forms $L_{1}(x), \ldots, L_{k}(x)$ (otherwise we are not able to define our ordering).
Consider $k$ linear form: $L_{i}(x) = x_{i} - x_{i+1}$ ($L_i$ will play a role of $u^i$ in the previous proof) for $i = 1, \ldots, k-1$ and $L_{0}(x) = x_{1} + x_{k}$.
The alternative (equivalent) definition of $g$ is that $g(x)$ is equal to the sign of the last nonzero in the sequence
$L_{0}(x), L_{1}(x), L_{2}(x), \ldots, L_{k-1}(x)$.
Note that now it is more convenient for us to start the numeration of $L_{i}$ from $0$.
The reason for this is that the actual benefit we will get only from linear forms $L_{1},\ldots, L_{k-1}$.
The form $L_{0}$ is needed only for technical reasons (the same role was previously played by the variables $v^{i}$).

Note that in the proof of the weaker bound we needed actually not one base function, but two of them.
They were very similar though: $\GT_0$ and $\GT_1$.
In the case of the stronger bound two functions will differ more substantially.
Again, $g_1$ is just the function $g$ we defined above.
As for $g_0$ we let $g_{0}(x)$ be $x_{1}$ if not all bits of the input are equal and $g_{0}(x) = - x_{1}$
if all bits of input are equal. That is now we not only reverse the order of variables but also
multiply the value of the function by $-1$.
Note that $g_{0}(x)$ is equal to the sign of the last nonzero in the sequence
$- L_{0}(x), L_{k-1}(x), L_{k-2}(x), \ldots, L_{1}(x)$.

Now we can apply the previous proof scheme with the functions $g_1$ and $g_0$ on the first $d-1$ components and with $\GT$
on the last component. We denote the new function by $f$ again. Below we state what changes in the proof.

For the new function the construction of the ordering is the same except that we use different orderings $<^{'}_0$ and $<^{'}_1$
on the first $d-1$ coordinates,
namely we let $<^{\prime}_1$ to be $0, 1, 2, \ldots, k-2, k-1$ and $<^{\prime}_{0}$ to be $0, k-1, k-2, \ldots, 2, 1$,
that is, $0$ is always the smallest element. The orderings on the last coordinate remains the same as before (as well as the rule~\eqref{eq.order}  defining the ordering on each next coordinate).

In the Definition~\ref{def.func} we now have only variables $x^{1}, x^{2},\ldots, x^{d}, y^{d}$
and we let
$$
f(x)=\sgn( (-1)^{c_1 + \ldots + c_{d-1}} L_{\alpha_1}(x^1)L_{\alpha_{2}}(x^2) \ldots L_{\alpha_{d-1}}(x^{d-1})(x_{\alpha_{d}}^{d}-y_{\alpha_{d}}^{d}))
$$
for the largest $\alpha$ for which the expression is nonzero,
where $c_i = 1$ if $\alpha_{i} = 0$ and the order corresponding to the $i$-th coordinate of $\alpha$ is $<^{\prime}_{0}$
and $c_i = 0$ otherwise.
If there is no such $\alpha$ (which can happen only if $x^d=y^d$) we let $f(x)$ to be $1$.
Note that the number of variables  $n = \sum_{i=1}^{d-1}k_i + 2k_d$ is almost twice less~than~before.

The theorem we prove has the following form.
\begin{thm} \label{theorem.main.strong}
Let $k_i \geq 3$ be odd for $i < d$ and $k_d \geq 3$. Then
$$
\signdeg(f) = d
$$
$$
W(f,d)\geq 2^{(k_d-2)\prod_{i=1}^{d-1}(k_i-1) - d \log n}
$$
\end{thm}

The proof of the theorem follows the same lines with minor changes.
For the sake of completeness we present the details.



Let us consider an arbitrary threshold gate $p$
for $f$ of degree at most $d$.
That is, for any $x^1, \ldots, x^{d-1}, x^d , y^d$ we have
$$
f(x^1, \ldots, x^{d-1}, x^d , y^d) = \sgn (p(x^1, \ldots, x^{d-1}, x^d , y^d)).
$$
It will be convenient for us to work in the variables
\begin{equation*}
u^i_{j} = L_{j}(x^{i})
\end{equation*}
for $i = 1, \ldots, d-1$ and
\begin{equation*}
u^{d}_{j} = x^{d}_j - y^{d}_{j},\ \ \ \ \ \ \ v^{d}_{j} = x^{d}_j + y^{d}_{j}.
\end{equation*}
So after substituting
$$
x^{i}_{j} = (u^i_0 -u^i_1 - \ldots - u^i_{j-1} + u^{i}_{j} + \ldots + u^i_{k-1})/2
$$
for $i= 1, \ldots, d-1$
and
$$
x^{d}_j = ({u^{d}_j + v^{d}_j})/{2},\ \ \ \ y^{d}_j = ({u^{d}_j - v^{d}_j})/{2}
$$
and multiplying the polynomial by $2^d$ to make the coefficients integer we obtain
the polynomial $p'$ in the variables $u^{i}_j,\ v^{d}_j$ that sign-represents $f$. That is
$$
f(x^1, \ldots, x^{d-1}, x^d , y^d) = \sgn(p'(u^1, \ldots, u^{d-1}, u^d , v^d)).
$$
It is easy to see that the weight of the new polynomial is almost the same as the weight of $p$
(compared to the value of our bound). Namely, we have the following bound.

\begin{lemma} 
$W(p') \leq n^d W(p)$.
\end{lemma}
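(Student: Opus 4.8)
The plan is to mimic the proof of Lemma~\ref{lemma.basis_change}, the only new feature being that the substitution now replaces each variable of the first $d-1$ groups by a linear form in up to $k_i$ new variables rather than in just two. I would begin by recording the submultiplicativity of the weight: for any polynomials $P$ and $Q$ one has $W(PQ) \le W(P)\,W(Q)$, which follows at once from the triangle inequality applied coefficient by coefficient. In particular the weight of a product of linear forms is at most the product of their weights, so it suffices to bound the weight produced by each monomial of $p$ separately and then sum.

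Next I would fix a single monomial $g$ of $p$, of total degree $l \le d$ and with integer coefficient $c_g$. Under the change of variables each occurrence of a variable $x^i_j$ with $i<d$ is replaced by $\tfrac12\bigl(u^i_0 - u^i_1 - \cdots - u^i_{j-1} + u^i_j + \cdots + u^i_{k_i-1}\bigr)$, a linear form of weight $k_i/2$, while each occurrence of $x^d_j$ or $y^d_j$ is replaced by $\tfrac12(u^d_j \pm v^d_j)$, of weight $1$. Hence, after multiplying through by the global factor $2^d$ to clear denominators, the image of $g$ equals $2^{d-l} c_g$ times a product of $l$ linear forms, each with integer coefficients and weight at most $\max_i k_i \le n$. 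By submultiplicativity the weight of this image is at most $2^{d-l}\,|c_g|\,n^l$, and since $n \ge 2$ we have $2^{d-l} \le n^{d-l}$, so the image of $g$ has weight at most $n^d |c_g|$.

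Summing over all monomials of $p$ then gives $W(p') \le n^d \sum_g |c_g| = n^d W(p)$, which is the assertion. I do not expect a genuine obstacle here: the only slightly delicate point is the bookkeeping between the $d-1$ ``wide'' substitutions (for the groups $x^1,\dots,x^{d-1}$) and the two ``narrow'' ones (for $x^d$ and $y^d$), together with the observation that the uniform crude bound $n$ on the numerator weights, combined with $2^{d-l}\le n^{d-l}$, exactly absorbs the residual power of two left over from clearing denominators.
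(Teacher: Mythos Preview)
Your proof is correct and follows essentially the same approach as the paper's: a monomial-by-monomial analysis in which each variable is replaced by a linear form in at most $\max_i k_i \le n$ new variables (after absorbing one factor of $2$), and the residual $2^{d-l}$ is then bounded by $n^{d-l}$. The only difference is cosmetic: you phrase the expansion via the submultiplicativity $W(PQ)\le W(P)W(Q)$, whereas the paper simply says that opening the brackets yields at most $n^l$ monomials each of weight $2^{d-l}$; these are two ways of saying the same thing.
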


\begin{proof}
Consider one monomial $g$ of $p$ of degree $l \leq d$.
We can think of the transformation above as of substituting each variable by the sum
of at most $\max_i k_i \leq n$ variables and then multiplying everything by $2^{d-l}$.
After opening the brackets we will have at most $n^{l}$ new monomials each of weight $2^{d-l}$,
so the overall weight of the new monomials coming from the monomial $g$ is at most $n^{d}$ times the coefficient of $g$.
Since this happens for all monomials, the upper bound on the weight of the polynomial $p'$ as stated follows.
\end{proof}

Next we prove that we can assume that $p'$ has a nice structure.
\begin{lemma} 
If we substitute by $0$ all coefficients of the monomials of $p'$ in which the variables from one of the groups $u^{1}, \ldots, u^{d}$ do not appear,
the resulting polynomial $q$ will also sign-represent $f$.
\end{lemma}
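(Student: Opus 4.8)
The plan is to follow the proof of Lemma~\ref{lemma.symmetry} almost verbatim, replacing its two ``symmetry moves'' by moves adapted to the new building blocks. First I would let $I$ be the set of inputs with $x^d \neq y^d$. On $I$ the value of $f$ is a genuine sign: in every group $i<d$ the form $u^i_0 = L_0(x^i) = x^i_1 + x^i_k$ is nonzero for all $\{-1,1\}$-inputs, and in group $d$ some $x^d_j - y^d_j$ is nonzero by the choice of $I$; hence there is a largest $\alpha \in K$ making the defining product nonzero, and $f$ equals the sign of $(-1)^{c_1+\dots+c_{d-1}} L_{\alpha_1}(x^1)\cdots L_{\alpha_{d-1}}(x^{d-1})(x^d_{\alpha_d}-y^d_{\alpha_d})$.

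For a fixed group $i<d$ the symmetry move is ``negate every coordinate of $x^i$''. Since the forms $L_0,\dots,L_{k_i-1}$ are homogeneous, this flips the sign of every $u^i_j = L_j(x^i)$, fixes all other $u$- and $v$-variables, and maps $I$ to itself. I would then check that $f$ changes sign on $I$: negating $x^i$ does not change which of the $L_j(x^i)$ vanish, so the largest admissible $\alpha$ is unchanged; the exponents $c_1,\dots,c_{d-1}$ depend only on $\alpha$ and the fixed order rule~\eqref{eq.order}, so they too are unchanged; and exactly one factor, $L_{\alpha_i}(x^i)$, of the defining product flips sign. Hence $p'$ changes sign on $I$. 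Writing $p' = A+B$, where $A$ collects the monomials with an even number of $u^i$-variables and $B$ the remaining ones, the move fixes $A$ and negates $B$, so $\sgn(A+B) = -\sgn(A-B)$ on $I$; this forces $|B|>|A|$ and therefore $\sgn B = \sgn p' = f$ on $I$, so $A$ may be deleted. For the last group $i=d$ the move is the one already used in Lemma~\ref{lemma.symmetry}, namely swapping $x^d$ with $y^d$: it flips each $u^d_j$, fixes each $v^d_j$, and flips $f$ on $I$ (now only the factor $x^d_{\alpha_d}-y^d_{\alpha_d}$ changes sign).

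Performing this deletion successively for $i = 1,\dots,d$ leaves a polynomial $q$ every monomial of which has an odd number of variables from each of the groups $u^1,\dots,u^d$; since $\deg q \le d$, each monomial has exactly one variable from each $u^i$ and none from $v^d$, so $q$ consists precisely of those monomials of $p'$ involving all $d$ groups, and $q$ sign-represents $f$ on $I$. Finally, for an input outside $I$ we have $x^d = y^d$, so all $u^d$-variables vanish and $q=0$; as $f=1$ there and $\sgn 0 = 1$ under the convention of this section, $q$ is correct on these inputs as well. The only genuinely new point, and the step I would be most careful about, is the verification that negating a whole block $x^i$ with $i<d$ flips $f$; this is exactly where the homogeneity (``uniformity'') of the forms $L_j$ and the independence of the exponents $c_j$ from the sign pattern of the forms are used. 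Everything else is a direct transcription of the earlier argument.
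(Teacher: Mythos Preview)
Your proof follows the paper's own argument almost exactly: the paper uses the same set $I=\{x^d\neq y^d\}$, the same two symmetry moves (swap $x^d\leftrightarrow y^d$ for the last group, send $x^i\mapsto -x^i$ for $i<d$), and the same $A+B$ parity decomposition to discard monomials with an even number of $u^i$-variables, concluding by the degree bound and the $q=0$ observation outside $I$.

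One small slip to fix: you assert that $u^i_0=L_0(x^i)=x^i_1+x^i_{k_i}$ is nonzero for all $\{-1,1\}$-inputs, but this fails whenever $x^i_1=-x^i_{k_i}$. The claim you actually need (and which the paper uses without comment) is that the forms $L_0(x^i),\dots,L_{k_i-1}(x^i)$ cannot all vanish simultaneously: if $L_1=\cdots=L_{k_i-1}=0$ then all coordinates of $x^i$ coincide, and then $L_0=2x^i_1\neq 0$. With this correction your justification that a largest admissible $\alpha$ exists on $I$ is complete, and the rest of your argument is fine.
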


\begin{proof}
Let $I$ be the set of inputs $(x,y)$ to $f$ such that $x^d \neq y^d$.
This condition implies that for each such input there is an
$\alpha = (\alpha_1, \ldots, \alpha_d) \in K = [k_1-1]_0 \times \ldots  \times [k_{d-1}-1]_0 \times [k_d]$ such that
$$
u^1_{\alpha_1}u^2_{\alpha_{2}} \ldots u^{d-1}_{\alpha_{d-1}}(x_{\alpha_{d}}^{d}-y_{\alpha_{d}}^{d}) \neq 0,
$$
where by $[k]_0$ we denote the set $\{0, \ldots k\}$.
This means that the value of the function is determined by the sign of this product for the largest such $\alpha$.

Consider an arbitrary input $(x,y)$ from $I$. Let us first permute the variables $x^{d}$ with the variables $y^{d}$.
It is not hard to see that after such permutation the function $f$ changes sign, and so $p'$ should also change sign.

Now let us see what happens with $p'(u^1, \ldots, u^d ,v^d)$. Note, that after this permutation all variables $u$ and $v^d$ remains the same except $u^d$
which changes sign. Let us denote by $A$ the sum of all monomials of $p'$ which contain even number of variables from $u^{d}$ and
by $B$ the sum of all monomials of $p'$ which contain odd number of variables from $u^{d}$, so
$$
p'(u,v^d) = A(u,v^d) + B(u,v^d).
$$
Note that after our permutation of variables $A$ remains the same and $B$ changes sign.
Since $p'$ changes sign we have that the absolute value of $B$ is greater then the absolute value of $A$.
We proved that this happens for any input from $I$, so the sign of $f$ is determined by the sign of $B$ for all such inputs.
So we can erase all monomials from $A$ and the resulting polynomial will still sign-represent $f$ for all inputs from $I$.

Now consider an arbitrary $i \in [d-1]$. Let us substitute the vector $x^{i}$ by the vector $-x^{i}$.
Again, it is not hard to see that after such substitution the function $f$ changes sign, and so $p'$ also should change sign.

Let us see what happens with $p'(u^1, \ldots, u^d ,v^d)$. Note, that after this substitution all variables $u$ and $v^d$ remains the same except $u^i$
which changes sign. Let us denote by $A$ the sum of all monomials of $p'$ which contain even number of variables from $u^{i}$ and
by $B$ the sum of all monomials of $p'$ which contain odd number of variables from $u^{i}$, so
$$
p'(u,v^d) = A(u,v^d) + B(u,v^d).
$$
After our substitution $A$ remains the same and $B$ changes sign.
Again, we can erase all monomials from $A$ and the resulting polynomial will still sign-represent $f$ for all inputs from $I$.

Repeating this argument for all $i \in [d-1]$ we obtain a threshold gate $q$ for $f$ such that
each monomial of $q$ contains odd number of variables from each of the groups $u^{1} ,\ldots, u^{d}$.
But the degree of $q$ is at most $d$, so it is only possible that each monomial of $q$ contains one variable
from each of the groups $u^{1} ,\ldots, u^{d}$ (and no variables from $v^{d}$).

We have proved that the new polynomial $q$ sign-represents $f$ correctly for all inputs from $I$.
Now note, that for any input from $\{-1,1\}^n \setminus I$ we have $x^{d} = y^{d}$, or equivalently,
all variables from $u^{d}$ are zeros.
By the definition the value of the function $f$ on such input is $1$ and the value of $q$ is $0$, so
$q$ sign-represents $f$ for these inputs also.
\end{proof}

As a byproduct of the proof of this lemma we have the following corollary.
\begin{corollary}
$\signdeg(f) = d$.
\end{corollary}

Since $W(p') \geq W(q)$ it is enough to prove that
$$
W(q) \geq 2^{(k_d-2)\prod_{i=1}^{d-1}(k_i-1)}.
$$

We present now the analog of Lemma~\ref{lemma.gt}.

\begin{lemma} \label{lemma.gt.strong}
Let
$
p = \sum_{i=0}^{k-1} w_i u_i
$
be a degree $1$ threshold gate for $g_1(x)$ where $x \in \mook$.
Then for $j = 0, 1, \ldots, k-1$ we have
$
w_j > 0
$
and for $j=2,\ldots, k-1$ we have
$
w_{j} > w_{j-1}
$.
\end{lemma}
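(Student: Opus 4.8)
The plan is to mimic the proof of Lemma~\ref{lemma.gt}, exploiting the explicit description of $g_1$ as the sign of the last nonzero element in the sequence $L_0(x), L_1(x), \ldots, L_{k-1}(x)$, where $L_i(x) = x_i - x_{i+1}$ for $1 \le i \le k-1$ and $L_0(x) = x_1 + x_k$. In the $u$-coordinates we have $u_i = L_i(x)$, so $p = \sum_{i=0}^{k-1} w_i u_i$ and we must feed $p$ specific inputs $x \in \{-1,1\}^k$ and read off sign constraints on the $w_i$ from the known value of $g_1(x)$. First I would establish $w_j > 0$ for each $j \in \{0,\ldots,k-1\}$: to pin down $w_j$ for $j \ge 1$, take the input $x$ with $x_1 = \cdots = x_j = 1$ and $x_{j+1} = \cdots = x_k = -1$, so that the only nonzero $L_i$ with $i \ge 1$ is $L_j = x_j - x_{j+1} = 2 > 0$, while $L_0 = x_1 + x_k = 0$; the bits are not all equal, the last nonzero entry is $L_j$, hence $g_1(x) = -x_k = 1$, forcing $p(x) = 2w_j \ge 0$, i.e.\ $w_j \ge 0$. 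To get strict positivity and to handle $w_0$, I would perturb: for instance flipping one more bit produces an input where $L_0$ is the last nonzero form and all bits are equal in the relevant sub-case, yielding a constraint of the form $w_0 > 0$ or $w_j > \text{(something)} \ge 0$; the standard trick (as in the last paragraph of the proof of Lemma~\ref{lemma.gt}, where an input gives $p(u) < 0$) converts the non-strict inequality into a strict one by choosing an input on which $g_1 = -1$.

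For the second assertion, $w_j > w_{j-1}$ for $j = 2, \ldots, k-1$, I would choose an input that simultaneously activates exactly $L_{j-1}$ and $L_j$ among the forms $L_1, \ldots, L_{k-1}$, with $L_0$ inactive, arranged so that $L_j$ is the \emph{last} nonzero one and has a negative sign while $L_{j-1}$ has a positive sign. Concretely, something like $x_1 = \cdots = x_{j-1} = 1$, $x_j = -1$, $x_{j+1} = 1$, $x_{j+2} = \cdots = x_k = -1$: then $L_{j-1} = x_{j-1} - x_j = 2$, $L_j = x_j - x_{j+1} = -2$, all other $L_i$ ($1 \le i \le k-1$) vanish, and $L_0 = x_1 + x_k = 0$; the bits are not all equal so $g_1(x) = -x_k = 1$, hence $p(x) = 2w_{j-1} - 2w_j \ge 0$ would give $w_{j-1} \ge w_j$, the wrong direction. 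So instead I would swap the roles — put the sign pattern so the value of $g_1$ is $-1$ on this input (or reverse which of $L_{j-1}, L_j$ is larger in index relative to $x_k$), obtaining $p(x) < 0$ and therefore $w_j > w_{j-1}$. Getting the exact bit pattern and the parity of $x_k$ right so that the last nonzero form is $L_j$ with the sign that makes $g_1(x) = -1$ is the one place that needs care.

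The only genuine obstacle is this bookkeeping: verifying, for the chosen inputs, that $L_0 = x_1 + x_k$ really is zero (so that the technically awkward form $L_0$ does not interfere), that no $L_i$ with $i$ between the two "active" indices is accidentally nonzero, and that "not all bits equal" holds so the rule $g_1(x) = -x_k$ applies — then reading off the correct strict inequality from $g_1(x) \in \{1,-1\}$. None of this is deep; it is the same pattern as Lemma~\ref{lemma.gt} adapted to the new building block, and once the inputs are written down the inequalities $w_j > 0$ and $w_j > w_{j-1}$ drop out immediately.
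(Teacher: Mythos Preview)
Your overall strategy---feed carefully chosen inputs and read off sign constraints, exactly as in Lemma~\ref{lemma.gt}---is the same as the paper's. The first part is fine: simply negate your input (take $x_1=\cdots=x_j=-1$, $x_{j+1}=\cdots=x_k=1$) so that $g_1(x)=-x_k=-1$ and $p(x)=-2w_j<0$, giving $w_j>0$ directly; for $j=0$ take all bits $-1$. This is what the paper does.

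The second part, however, has a real gap. You want an input on which only $L_{j-1}$ and $L_j$ are active while $L_0=x_1+x_k$ vanishes, but this is \emph{impossible}. Having $L_i=0$ for all $i\in\{1,\ldots,k-1\}\setminus\{j-1,j\}$ forces $x_1=\cdots=x_{j-1}$ and $x_{j+1}=\cdots=x_k$, both differing from $x_j$; since the values are $\pm1$ this gives $x_1=x_k$ and hence $L_0=2x_1\neq0$. (Your concrete candidate $x=(1,\ldots,1,-1,1,-1,\ldots,-1)$ also has $L_{j+1}=2\neq0$, so the last nonzero form is $L_{j+1}$, not $L_j$.) So the ``technically awkward form $L_0$'' cannot be kept out of the picture by bookkeeping alone.

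The fix, which is the one non-obvious step, is to \emph{use} $L_0$ rather than avoid it. Take $x_j=-1$ and $x_l=1$ for $l\neq j$; then $L_{j-1}=2$, $L_j=-2$, $L_0=2$, all other $L_i=0$, and $g_1(x)=-x_k=-1$ gives $p(x)=2(w_{j-1}-w_j+w_0)<0$, i.e.\ $w_j>w_{j-1}+w_0>w_{j-1}$, the last inequality using the already-established $w_0>0$. This is exactly the paper's argument.
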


For the function $g_0$ analogous statement is true.

\begin{lemma} \label{lemma.gt.strong2}
Let
$
p = \sum_{i=0}^{k-1} w_i u_i
$
be a degree $1$ threshold gate for $g_0(x)$ where $x \in \mook$.
Then we have $w_0 < 0$, for $j = 1, \ldots, k-1$ we have
$
w_j > 0$
and for $j=2,\ldots, k-1$ we have
$
w_{j-1} > w_{j}.
$
\end{lemma}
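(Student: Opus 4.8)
To prove Lemma~\ref{lemma.gt.strong2} I would exploit the relationship between $g_0$ and $g_1$ rather than redo the sign-analysis from scratch. Recall that $g_0(x)$ equals the sign of the last nonzero element in the sequence $-L_0(x), L_{k-1}(x), L_{k-2}(x), \ldots, L_1(x)$, whereas $g_1(x)$ is the sign of the last nonzero in $L_0(x), L_1(x), \ldots, L_{k-1}(x)$. Since $L_i(x) = x_i - x_{i+1}$ for $1\le i\le k-1$ and $L_0(x) = x_1 + x_k$, reversing the order of the variables $x \mapsto (x_k, x_{k-1}, \ldots, x_1)$ turns $L_i$ into (up to sign) $L_{k-i}$, and the extra overall $-1$ in the definition of $g_0$ matches the leading $-L_0$. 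So $g_0(x) = g_1(\rho(x))$ for a suitable sign-twist $\rho$ of the coordinates, exactly analogously to the $\GT_0$ versus $\GT_1$ situation in the weaker proof. If $p = \sum_{i=0}^{k-1} w_i u_i$ is a threshold gate for $g_0$, then composing with $\rho$ produces a threshold gate for $g_1$ whose coefficient of $u_0$ is $-w_0$ and whose coefficient of $u_j$ (for $1\le j\le k-1$) is $w_{k-j}$ (with the appropriate sign bookkeeping). Applying Lemma~\ref{lemma.gt.strong} to that gate then yields $-w_0 > 0$, i.e. $w_0 < 0$; $w_j > 0$ for $1\le j\le k-1$; and the strict monotonicity $w_{k-j} > w_{k-j+1}$ from $w_{j}>w_{j-1}$, which after re-indexing is precisely $w_{j-1} > w_j$ for $j = 2, \ldots, k-1$.

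An alternative, and perhaps cleaner, route is to prove Lemma~\ref{lemma.gt.strong} and Lemma~\ref{lemma.gt.strong2} side by side by plugging in explicit test inputs, mirroring the proof of Lemma~\ref{lemma.gt}. For Lemma~\ref{lemma.gt.strong}: to get $w_0 > 0$ use the all-equal input $x = (1,\ldots,1)$ (all $L_i$ vanish except $L_0 = 2$, $g_1 = 1$, forcing $p \ge 0$, i.e. $2w_0 \ge 0$), combined with $x = (-1,\ldots,-1)$ to pin down the sign; to get $w_j > w_{j-1}$ for $j\ge 2$ choose an input that makes exactly $L_{j-1}$ and $L_j$ nonzero with opposite signs and $g_1$ forced to the sign of $L_j$ — concretely a string that is constant up to position $j-1$, flips once, then is constant again — so that $p$ on that input and on its ``reflection'' across coordinates $j-1,j$ yields the two inequalities whose combination gives strictness. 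For Lemma~\ref{lemma.gt.strong2} one runs the symmetric argument with the reversed numeration.

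The main obstacle is not conceptual but bookkeeping: keeping track of the signs introduced by the coordinate transformation $\rho$ and by the overall $-1$ in the definition of $g_0$, so that the $L_0$-term really does flip sign (giving $w_0 < 0$ rather than $w_0 > 0$) while the $L_1, \ldots, L_{k-1}$ terms behave exactly as in Lemma~\ref{lemma.gt.strong}. One must also be slightly careful that $\rho$ is a genuine involution of $\moo^k$ compatible with the linear forms, i.e. that every input to $g_0$ corresponds under $\rho$ to an input to $g_1$ realizing the same pattern of vanishing linear forms; this is where the specific shape $L_i(x) = x_i - x_{i+1}$ is used. Once the sign dictionary between the two gates is fixed, both conclusions of Lemma~\ref{lemma.gt.strong2} follow immediately from Lemma~\ref{lemma.gt.strong}, and these two lemmas together supply exactly the degree-$1$ facts needed to run the analog of Lemma~\ref{lemma.main} for the improved bound.
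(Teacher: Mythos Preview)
Your proposal is correct. The paper, however, takes exactly your ``alternative'' route and not your primary one: it proves Lemma~\ref{lemma.gt.strong2} directly by plugging in three families of test inputs (the all-ones vector for $w_0<0$, the step input $x_1=\cdots=x_j=-1$, $x_{j+1}=\cdots=x_k=1$ for $w_j>0$, and the single-flip input $x_j=1$, $x_l=-1$ for $l\neq j$ which yields $-w_{j-1}+w_j-w_0<0$ and hence $w_{j-1}>w_j$ using $w_0<0$). Your primary approach---reducing to Lemma~\ref{lemma.gt.strong} via the involution $\rho(x)=(-x_k,\ldots,-x_1)$---is a genuinely different and arguably cleaner argument: one checks $g_0=g_1\circ\rho$ with no extra global sign, and under $\rho$ the linear forms transform as $L_0\mapsto -L_0$, $L_i\mapsto L_{k-i}$ for $i\geq 1$, so a threshold gate $\sum_i w_iu_i$ for $g_0$ becomes the threshold gate $-w_0u_0+\sum_{j\geq 1}w_{k-j}u_j$ for $g_1$, and all three conclusions drop out of Lemma~\ref{lemma.gt.strong} at once. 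The gain of your reduction is that it eliminates the repeated case analysis and makes the $g_0$/$g_1$ symmetry structural; the gain of the paper's direct computation is that it sidesteps precisely the sign bookkeeping you identify as the main obstacle (and, incidentally, avoids any worry about the $\sgn(0)=1$ convention, which would bite if one tried to use mere reversal without the negation in $\rho$).
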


\begin{proof}[Proof of Lemma~\ref{lemma.gt.strong}]
To show the first part of the lemma for $j \neq 0$ let
$x_1 = \ldots = x_{j} = -1$ and $x_{j+1} = \ldots = x_{k} = 1$.
Then $g_{1}(x) = -1$ and thus $p(x) = -2w_{j} < 0$.

For $j=0$ let $x_{1} = \ldots = x_{k} = -1$.
Then again $g_{1}(x) = -1$ and $p(x) = -2w_{0} <0$.

To show the second part let $x_{j} = -1$ and $x_{l}=1$ for $l \neq j$.
Then we have $g_{1}(x)=-1$ and $p(x) = 2(w_{j-1} - w_{j} + w_{0}) <0$.
And thus $w_{j} > w_{j-1} + w_{0} > w_{j-1}$.
\end{proof}

\begin{proof}[Proof of Lemma~\ref{lemma.gt.strong2}]
For $j \neq 0$ letting $x_{1}=\ldots=x_{j}=-1$, $x_{j+1}=\ldots=x_{k}=1$ we have $- w_{j} < 0$.
Letting $x_{1}=\ldots=x_{k}=1$ we have $w_{0} < 0$ (note that we have $-L_{0}$ in the sequence defining $g_0$).
For $j=2,\ldots,k-1$ letting $x_{j}=1$ and $x_{l}=-1$ for $l \neq j$ we have $-w_{j-1} + w_{j} - w_{0} < 0$
and thus $w_{j-1} > w_{j}$ since $w_{0}$ is negative.
\end{proof}

The analog of Lemma~\ref{lemma.main} is very similar to the previous version,
but becomes a little bit clumsy since we distinguish cases of $l=d$ and $l<d$.

\begin{lemma} \label{lemma.main.strong}
For all $l \leq d$ if $\alpha \in K$ is such that
$
\num_{\alpha,i} \alpha_{i} = 1
$
for all $i \geq l$ and
$\beta = (\alpha_1, \ldots, k_l - \alpha_l + \delta_{l,d}, \ldots, \alpha_d)$,
where $\delta_{l,d}$ is a Kronecker delta (that is $\delta_{ij}=1$ if $i=j$ and $\delta_{ij} = 0$ otherwise)
then
$$
w_{\beta} \geq w_{\alpha} 2^{(k_d-2)\prod_{i=l}^{d-1}(k_i-1)}.
$$
\end{lemma}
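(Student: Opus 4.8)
The plan is to imitate the proof of Lemma~\ref{lemma.main} essentially line by line, running the same downward induction on $l$, and to isolate the only two places where the new building blocks change anything. The first is that the base case $l=d$ must now be separated from the inductive step: for $l=d$ the last coordinate is still governed by $\GT$ and is reversed by $j\mapsto k_d-j+1$, whereas for $l<d$ the $l$-th coordinate is governed by $g_1$ or $g_0$, whose forms are indexed $0,1,\dots,k_l-1$ and whose useful part is reversed by $j\mapsto k_l-j$ on $\{1,\dots,k_l-1\}$ (the index $0$, i.e.\ the form $L_0$, plays the technical role the variables $v^i$ played before and never enters the monotone chain). The Kronecker delta $\delta_{l,d}$ in the statement records exactly this discrepancy. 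The second change is that in the inductive step one coordinate now costs $k_l$ variables but yields $k_l-1$ applications of Step~1 rather than $k_l$, because the monotone chain of weights supplied by Lemma~\ref{lemma.gt.strong} runs through the indices $1,2,\dots,k_l-1$ only.

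For the base case $l=d$ I would freeze every group $u^i$ with $i<d$ by choosing $x^i\in\{-1,1\}^{k_i}$ so that $L_{\alpha_i}(x^i)$ is a nonzero constant and $L_j(x^i)=0$ for every $j\ne\alpha_i$ (including $j=0$): when $\alpha_i\ge1$ take $x^i$ whose first $\alpha_i$ coordinates are one value and the rest the opposite value, so that $L_0(x^i)=0$ as well, and when $\alpha_i=0$ take $x^i$ constant. The signs of these constants can be picked so that their product, times the sign $(-1)^{c_1+\dots+c_{d-1}}$ in the definition of $f$, is positive; then the restriction of $f$ to the group $u^d$ is exactly $\GT_1$ or $\GT_0$ (according to whether the order attached to the $d$-th coordinate of $\alpha$ is $<_1$ or $<_0$, the former giving $\GT_1$), and the corresponding restriction of the threshold gate $q$ is a degree-$1$ threshold gate for it. Applying the inequality~\eqref{eq.gt.exp} (or its $\GT_0$ analogue) with $j=k_d$ gives $w_\beta\ge 2^{k_d-2}w_\alpha$, which is the claimed bound since $\prod_{i=d}^{d-1}(k_i-1)$ is an empty product equal to $1$.

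For the inductive step $l<d$ I would keep the coordinates $1,\dots,l-1$ and $l+2,\dots,d$ frozen throughout, so the order attached to coordinate $l$ never changes; assume without loss of generality that it is $<^{\prime}_1$ (the case $<^{\prime}_0$ being symmetric, with $g_1$ replaced by $g_0$ and Lemma~\ref{lemma.gt.strong} by Lemma~\ref{lemma.gt.strong2}), which puts $\alpha_l$ at index $1$. Then I iterate the two-step procedure: Step~1 applies the induction hypothesis to coordinate $l+1$, multiplying the relevant weight by $2^{(k_d-2)\prod_{i=l+1}^{d-1}(k_i-1)}$ and sending coordinate $l+1$ to its opposite extreme; Step~2 freezes every group but $u^l$ (again choosing signs so the restriction is $g_1$ and not $-g_1$), so the restriction of $q$ is a degree-$1$ threshold gate for $g_1$, and the inequality $w_j>w_{j-1}$ of Lemma~\ref{lemma.gt.strong} advances coordinate $l$ by one index without decreasing the weight; by rule~\eqref{eq.order} this advance flips the order attached to coordinate $l+1$, which, because that coordinate is sitting at its extreme, resets its ordinal number to $1$ and makes Step~1 applicable again. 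Starting from index $1$ one can run Step~2 a total of $k_l-2$ times (ending at index $k_l-1$) and Step~1 a total of $k_l-1$ times, so multiplying the $k_l-1$ gains of Step~1 yields $w_\beta\ge w_\alpha\,2^{(k_d-2)(k_l-1)\prod_{i=l+1}^{d-1}(k_i-1)}=w_\alpha\,2^{(k_d-2)\prod_{i=l}^{d-1}(k_i-1)}$. It remains to check, exactly as in Lemma~\ref{lemma.main}, that the tuple produced is precisely $\beta$: only coordinates $l$ and $l+1$ have moved, coordinate $l$ has landed on $k_l-\alpha_l$, the parity of the ordinal number of this coordinate has changed (this is where the oddness of $k_l$ is used), so the order attached to coordinate $l+1$ is now the reverse of the original one, and coordinate $l+1$, whose ordinal number with respect to that reversed order equals $k_{l+1}-1$, coincides with $\alpha_{l+1}$.

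The step I expect to be the main obstacle is the bookkeeping in the inductive step: one must keep track, as the procedure runs, of which of $<^{\prime}_0,<^{\prime}_1$ is currently attached to coordinate $l+1$; one must check that freezing the other groups really produces $g_1$ (or $g_0$) rather than its negation, which is where the freedom in the signs of the frozen values, together with the uniformity of the forms $L_i$ that makes the symmetry argument of Lemma~\ref{lemma.symmetry} work, is used; and one must confirm at the end that coordinate $l+1$ has returned to $\alpha_{l+1}$, which is exactly the point where the oddness of $k_l$ is needed. The base case and the final deduction of Theorem~\ref{theorem.main.strong} (apply the lemma with $l=1$, note $w_\alpha\ge1$, and combine $W(p')\ge W(q)$ with $W(p')\le n^d W(p)$) are routine.
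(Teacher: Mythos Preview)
Your proposal is correct and follows essentially the same approach as the paper's own proof: downward induction on $l$, with the base case $l=d$ handled via $\GT$ and inequality~\eqref{eq.gt.exp}, and the inductive step carried out by the same two-step procedure (induction hypothesis on coordinate $l+1$, then Lemma~\ref{lemma.gt.strong} to advance coordinate $l$), iterated $k_l-1$ and $k_l-2$ times respectively, with the oddness of $k_l$ used to verify that the final tuple is $\beta$. Your outline is in fact somewhat more explicit than the paper's about how to realise the freezing of the groups $u^i$ via concrete choices of $x^i$ and about the sign bookkeeping, but the argument is the same.
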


Concerning the proof of the lemma, the base of the induction remains completely the same (note, that the statement is the same also).
As for the induction step, it also remains the same but now we can apply the induction hypothesis $k_l-1$ times instead of $k_l$ times in the previous proof.
For the sake of completeness we present the proof.
\begin{proof}
This proof repeats the proof of Lemma~\ref{lemma.main} almost literally.

\noindent
The proof goes by induction on decreasing $l$.

\paragraph{\textbf{The base of induction $l=d$.}}

We fix all variables $u^{i}$ except $u^{d}$ in the following way.
For any $i$ let $u^{i}_{\alpha_i} = 1$ and $u^{i}_{j} = 0$ for all $j \neq \alpha_{i}$
(see the proof of Lemma~\ref{lemma.gt.strong} on how to do this).

Now we have a function in variables $u^{d}$ and it is not hard to see that this function coincides with either $\GT_0$, or $\GT_1$.
Applying the inequality~\eqref{eq.gt.exp} (or the corresponding inequality for $\GT_0$) for $j=k_d$ we obtain
$$
w_{\beta} \geq 2^{k_d - 2} w_{\alpha}.
$$

\paragraph{\textbf{Induction step.}}

To show the lemma for $l<d$
we repeat several times the following procedure consisting of two steps.
After the $i$th application of the procedure we will get a tuple $\alpha^i$ such that
$\num_{\alpha^i,l} \alpha^{i}_{l} = i+1$ and $\num_{\alpha^i,j} \alpha^{i}_{j} = 1$ for all $j>l$.
To unify the notation we denote $\alpha^0 = \alpha$.

During the procedure we will not change the values of the first $l-1$ coordinates.
This means that for all tuples we consider, the order corresponding to the $l$th coordinate is the same.
Let us assume without loss of generality that this order is $<^{\prime}_{1}$.
This in particular means that $\alpha_l = 1$.

We also will not change the coordinates $\alpha_{l+2}, \ldots, \alpha_{d}$.
Note, that this means that if $\num_{\alpha^{i},l+1} \alpha^{i}_{l+1}$ is odd (as in the beginning)
then all
$$
\num_{\alpha^i,l+2} \alpha^{i}_{l+2}, \ldots, \num_{\alpha^i,d} \alpha^{i}_{d}
$$
are equal to $1$.

\textbf{Step 1.}
We apply the induction hypothesis for the coordinate $l+1$.
We have that for $\widetilde{\alpha}^{i+1} = (\alpha_1, \ldots, \alpha^{i}_l, k_{l+1} - \alpha^{i}_{l+1} + \delta_{l+1,d}, \ldots, \alpha_d)$
$$
w_{\widetilde{\alpha}^{i+1}} \geq w_{\alpha^i} 2^{(k_d-2)\prod_{i=l+1}^{d-1}(k_i-1)}.
$$
Note that now the ordinal number of $l+1$st coordinate (w.r.t. the corresponding order) is $k_{l+1}-1 + \delta_{l+1,d}$.

\textbf{Step 2.}
We fix all variables $u^{j}$ except $u^{l}$ in the following way:
for any $j$ let $u^{j}_{\widetilde{\alpha}^{i+1}_{j}} = 1$ and $u^{j}_{m} = 0$ for all $m \neq \widetilde{\alpha}^{i+1}_{j}$.
Now we have a function in the variables $u^{l}$ and it is not hard to see that this function coincides with $g_1$
(this happens because we agreed that the order corresponding to the $l$th component is $<^{\prime}_{1}$, if it were $<^{\prime}_{0}$ we would have $g_{0}$ here).
We apply the inequality from Lemma~\ref{lemma.gt.strong} to the coordinate $l$.
After that we get $\alpha^{i+1} = (\alpha_1, \ldots, \alpha_l + i+1, k_{l+1} - \alpha^{i}_{l+1}+\delta_{l+1,d}, \ldots, \alpha_d)$
such that
$$
w_{\alpha^{i+1}} \geq w_{\widetilde{\alpha}^{i+1}}.
$$
Due to the rule~\eqref{eq.order} defining the order on the next component of the tuple we have that
the order on the $l+1$st component changes.
This means that the ordinal number of the $l+1$st coordinate w.r.t. the corresponding order is again $1$.
>From this we have, as we stated above, that for all $j \geq l+1$ it is true that $\num_{\alpha^{i+1},j} \alpha^{i+1}_{j} = 1$.
So now we are again in the position to apply Step $1$.

We repeat these two steps until the $l$th coordinate of $\widetilde{\alpha}^{i}$ reaches $k_l -1$
(in the end we repeat Step $1$, we are unable to repeat Step $2$ since the $l$th coordinate is already $k_l - 1$ and can not be increased,
so in the end we get $\widetilde{\alpha}^{i}$ for suitable $i$).
Since the $l$th coordinate increases by $1$ at each iteration (on Step $2$) we can repeat Step $2$ $k_{l}-2$ times and Step $1$ $k_l-1$ times.
Thus in the end we get the vector $\widetilde{\alpha}^{k_{l}-1}$
and it is easy to see that
$$
w_{\widetilde{\alpha}^{k_{l}-1}} \geq w_{\alpha} \prod_{j=1}^{k_{l}-1} 2^{(k_d-2)\prod_{i=l+1}^{d-1}(k_i-1)} =
w_{\alpha} 2^{(k_d-2)\prod_{i=l}^{d-1}(k_i-1)}.
$$

Let us check that $\beta = \widetilde{\alpha}^{k_{l}-1}$.
It is easy to see that during this process only the coordinates $l$ and $l+1$ of $\alpha$ changes.
In the end of the process coordinate $l$ has the number $k_l-1$ w.r.t. the order corresponding to the $l$th coordinate of $\alpha$.
This means that $\widetilde{\alpha}^{k_l-1}_{l} = k_l - \alpha_l = \beta_l$.
Note also that from this and from the oddness of $k_l$ we have that $\alpha_{l}$ and $\widetilde{\alpha}^{k_l-1}_{l}$
define different orders on the next coordinates.
Now let us see what happens with the coordinate $l+1$.
After the process the ordinal number of $\widetilde{\alpha}^{k_l-1}_{l+1}$
w.r.t. the order corresponding to the $(l+1)$st component of $\widetilde{\alpha}^{k_l-1}$ is equal to $k_{l+1} - 1 + \delta_{l+1,d}$.
This means that the ordinal number of $\widetilde{\alpha}^{k_l-1}_{l+1}$ w.r.t. the other ordering is $1$.
Since orders corresponding to the $(l+1)$st coordinates of $\alpha$ and $\widetilde{\alpha}^{k_l-1}$ are different
we have that $\widetilde{\alpha}^{k_l-1}_{l+1} = \alpha_{l+1} = \beta_{l+1}$ (recall that $\num_{\alpha,l+1} \alpha_{l+1} = 1$).
\end{proof}

To conclude the proof of our lower bound we have to choose the values of $k_{i}$ to maximize the lower bound we have.
It is not hard to see that the optimal way is to take $k_d = 3$ as before and to take $k_1 = k_{2} = \ldots = k_{d-1}$,
let us denote the value of them by $k$. Then the exponent of our bound will be about $(k-1)^{n/k}$.
Simple analysis shows that the maximum (over integers) is attained when $k=5$.
Thus we have a lower bound of $2^{2^{2(n-6)/5} - n}$.

To prove the first part of Theorem~\ref{theorem.high.deg} we can just let $m = \frac{5}{4} \eps n$ and
consider the function from the previous paragraph with $m$ variables. Then we have the lower bound $2^{\Omega(2^{2m/5})}$
for the degree $m/5$ threshold gates and applying the observation preceding Theorem~\ref{theorem.high.deg} we get the desired bound.
For the second part of the theorem let $m = \frac{5}{2} (1 + \eps) \log n$.

Finally we note that the result of~\cite{ppi09weights} can also be reproved by the same argument and with better constants if we
use H{\aa}stad's function in the last coordinate and $g$ function in other coordinates.

{\small
\bibliographystyle{abbrv}
\bibliography{bib/perceptron,bib/podolskii,bib/threshold,bib/sherstov,bib/general,bib/fourier}
}

\end{document}